\theoremstyle{definition}
\newtheorem{lem}{Lemma}
\newtheorem{observation}{Observation}
\newcommand{\vect}[1]{\mathbold{#1}}
\begin{document}

\title{Scalable Certification of Entanglement in Quantum Networks}

\author{Jing-Tao Qiu}
\affiliation{Department of Physics, Shandong University, Jinan 250100, China}

\author{D. M. Tong}
\email{tdm@sdu.edu.cn}
\affiliation{Department of Physics, Shandong University, Jinan 250100, China}
\date{\today}

\author{Xiao-Dong Yu}
\email{yuxiaodong@sdu.edu.cn}
\affiliation{Department of Physics, Shandong University, Jinan 250100, China}

\begin{abstract}
Quantum networks form the backbone of long-distance quantum information
processing. Genuine multipartite entanglement (GME) serves as a key indicator of
network performance and overall state quality. However, the widely used methods
for certifying GME suffer from a major drawback that they either detect only a
limited range of states or are applicable only to systems with a small number of
parties. To overcome these limitations, we propose a family of sub-symmetric
witnesses (SSWs), which are tractable both theoretically and experimentally.
Analytically, we establish a connection between SSWs and the cut space of graph
theory, enabling several powerful detection criteria tailored to practical
quantum networks. Numerically, we show that the optimal detection can be
formulated as a linear program, offering a significant efficiency advantage over
the semidefinite programs commonly employed in quantum certification.
Experimentally, SSWs can be evaluated via local measurements, with resource
requirements independent of the local dimension in general, and even
independent of the overall network size in many practical networks.
\end{abstract}

\maketitle

%\section{Introduction}
\textit{Introduction.---}%
For future long-range quantum information processing tasks, quantum networks
featuring the entanglement shared among different parties are the promising
candidate, and have been successfully established in laboratory and urban
settings \cite{PeevSECOQCQuantumKey2009, SasakiFieldtestquantum2011,
MaoIntegratingquantumkey2018}.
In practice, quantum networks allow remote quantum devices to operate as a unified system, supporting applications such as secure quantum communication \cite{Scaranisecuritypracticalquantum2009, XuSecurequantumkey2020, PirandolaAdvancesquantumcryptography2020},
distributed quantum computation \cite{CacciapuotiQuantumInternetNetworking2020, BarralReviewDistributedQuantum2025}, clock synchronization \cite{Komarquantumnetworkclocks2014, Nicholelementaryquantumnetwork2022}, and quantum sensing \cite{GeDistributedQuantumMetrology2018, GuoDistributedquantumsensing2020, ZhangDistributedquantumsensing2021}. 
In addition to their technological importance, quantum networks provide a
fertile platform for advancing quantum foundations. Networked configurations
introduce independence constraints and causal structures that do not appear in
standard Bell scenarios, giving rise to richer forms of correlations and new
manifestations of nonlocality \cite{BranciardCharacterizingNonlocalCorrelations2010, BranciardBilocalversusnonbilocal2012, FritzBellstheoremcorrelation2012, TavakoliBellnonlocalitynetworks2022, PozasKerstjensFullNetworkNonlocality2022}.

A central challenge in advancing quantum networks is the reliable certification
of quantum resources across complex topologies
\cite{HorodeckiQuantumentanglement2009, GuehneEntanglementdetection2009,
FriisEntanglementcertificationtheory2019,
NavascuesGenuineNetworkMultipartite2020, WolfeQuantumInflationGeneral2021}.
As networks grow in size, ensuring that the distributed state retains useful
quantum features, such as entanglement or nonlocality, becomes increasingly
difficult \cite{Kimblequantuminternet2008, WehnerQuantuminternetvision2018}.
Among these resources, genuine multipartite
entanglement (GME) plays a fundamental role. Unlike bipartite correlations, GME
captures irreducible quantum correlations shared across the entire network and
is essential for tasks such as measurement-based computation \cite{RaussendorfOneWayQuantum2001, RaussendorfMeasurementbasedquantum2003, BriegelMeasurementbasedquantum2009}, network
teleportation \cite{PirandolaAdvancesquantumteleportation2015,
HermansQubitteleportationnon2022}, and multi-party cryptographic protocols
\cite{HilleryQuantumsecretsharing1999, CleveHowShareQuantum1999,
ChenMulti‑partitequantumcryptographic2007, MurtaQuantumConferenceKey2020}. A
deeper understanding of GME therefore
directly supports the efficient implementation of these network-level quantum
information processing tasks.

However, the characterization of quantum entanglement is intrinsically
challenging \cite{HorodeckiQuantumentanglement2009,
GuehneEntanglementdetection2009, FriisEntanglementcertificationtheory2019}. Even
in the bipartite setting, deciding whether a quantum state is entangled is
NP-hard \cite{GurvitsClassicaldeterministiccomplexity2003}. The complexity grows
substantially in the multipartite regime, where entanglement structures become
significantly richer and more difficult to certify.
For GME in particular, two widely used approaches are fidelity-based
entanglement witnesses \cite{BourennaneExperimentalDetectionMultipartite2004}
and semidefinite-program (SDP) relaxations based on the
positive partial transpose (PPT) criterion \cite{JungnitschTamingMultiparticleEntanglement2011}. However, neither method extends
naturally to large quantum networks.
Fidelity-based witnesses are simple to construct and experimentally friendly,
but they can certify only a narrow class of target states \cite{WeilenmannEntanglementDetectionMeasuring2020}. SDP-based PPT
relaxations, while capable of detecting a broader range of entangled states,
suffer from rapid growth in computational cost with increasing network size and
local dimension. As a result, these SDPs become intractable beyond small systems
containing only a few qubits.

In this work, we introduce a family of sub-symmetric witnesses (SSWs) that
provide a scalable approach to certifying and even quantifying GME
in quantum networks. Theoretically, we establish a connection between SSWs and
the cut space of graph theory, enabling detection criteria that naturally align
with realistic network topologies. Computationally, we show that optimizing
SSW-based detection reduces to a linear program, offering a substantial
efficiency advantage over the semidefinite programs commonly used in
entanglement certification.
Experimentally, SSWs can be evaluated via local measurements, with resource
requirements independent of the local dimension in general, and even
independent of the overall network size in many practical networks.
To illustrate the strength of our approach, we
demonstrate that for quantum networks with randomly generated topologies, SSWs
significantly outperform existing GME certification methods.

%\section{Preliminaries}
\textit{Preliminaries.---}%
The celebrated protocol of quantum teleportation demonstrates that quantum
information can be transmitted using shared entanglement together with classical
communication \cite{BennettTeleportingunknownquantum1993, ifmmodeZelseZfiukowskiEventreadydetectors1993}. This insight motivates the development of quantum
networks, in which multiple parties are interconnected by shared quantum states,
enabling distributed quantum information processing
\cite{Kimblequantuminternet2008, WehnerQuantuminternetvision2018}.
The most commonly shared states are Bell states, but the distribution process is
inevitably affected by noise \cite{BennettPurificationNoisyEntanglement1996}. Under the independence assumption on the
noise, the network states can be written as the tensor product of these 
bipartite states, together with the network topology represented by an
undirected graph $G=(V,E)$, where the vertices $V$
represent the parties and the edges $E$ represent the shared states as
demonstrated in Fig.~\ref{fig:networks}. Without loss of generality, all
networks considered in this work are assumed to be connected.

For example, the bilocal network in Fig.~\ref{fig:networks}(a) has three
parties $A$, $B$, and $C$ \cite{BranciardCharacterizingNonlocalCorrelations2010}. Parties $A$ and $B$ share a bipartite state
$\rho_{AB_1}$, and parties $B$ and $C$ share a bipartite state $\rho_{B_2C}$.
Therefore, the overall state of the network is
$\rho=\rho_{AB_1}\otimes\rho_{B_2C}$, which gives rise to multipartite
correlations among the three parties $A$, $B=B_1B_2$, and $C$.
Note that the Hilbert space dimensions of different parties in the network may
differ. In this particular case, we have $\dim(B)=\dim(A)\dim(C)$. To simplify
notation, we hereafter omit the subscripts indicating local subsystems of
the network states.
For instance, the network state in Fig.~\ref{fig:networks}(a) is written as
$\rho=\rho_{AB}\otimes\rho_{BC}$, and that in Fig.~\ref{fig:networks}(b)
as $\rho=\rho_{AB}\otimes\rho_{BC}\otimes\rho_{BD}$.

The principal goal of this work is to develop an efficient method for
determining the overall entanglement properties of network states.
Formally, a state of $n$ parties is said to exhibit GME if it cannot be
written as a mixture of states separable with
respect to some bipartition of the parties, otherwise it is called biseparable. 
For example, the state $\rho=\rho_{AB}\otimes\rho_{BC}$ in the bilocal network
is GME if and only if the state is not a mixture of states $\rho_{A|BC}$,
$\rho_{B|AC}$, $\rho_{C|AB}$, which are separable with respect to the partitions
$A|BC$, $B|AC$, $C|AB$, respectively.

\begin{figure}
    \includegraphics[width=0.45\textwidth]{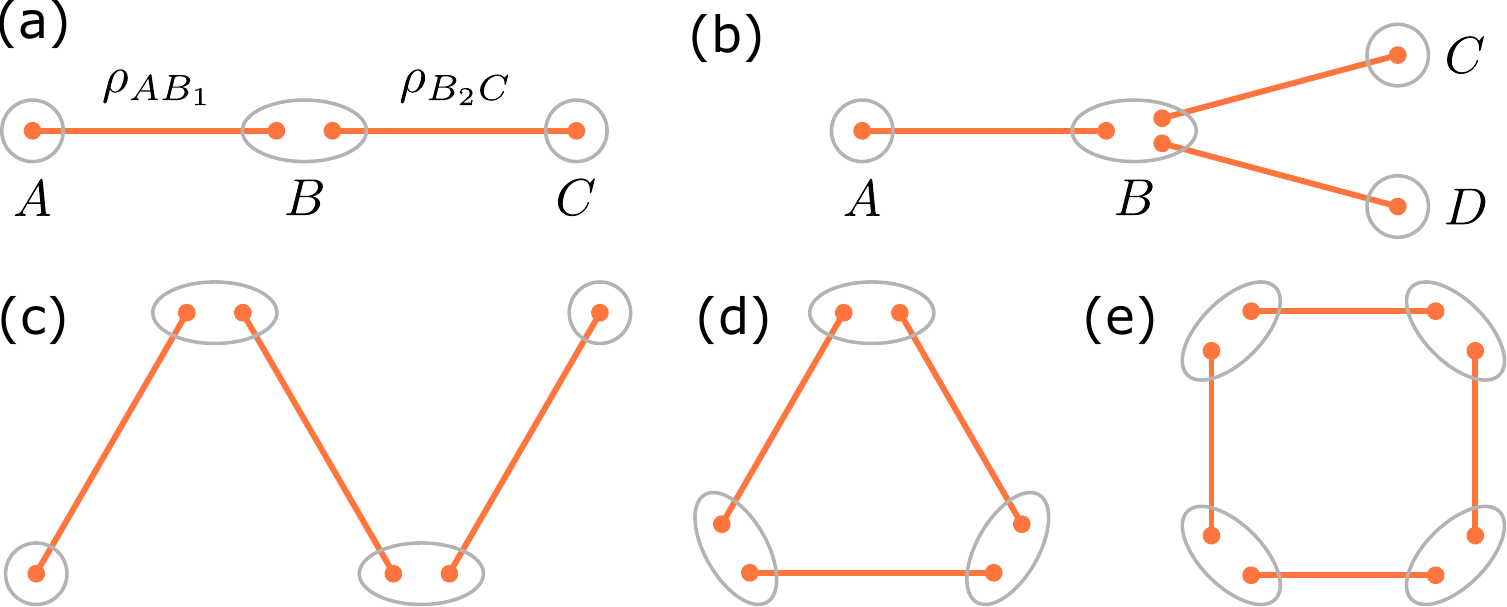}
    \caption{
    Networks with different topologies. (a)-(c) are tree networks, (d) and (e) are ring networks.
    }
	\label{fig:networks}
\end{figure}

For the certification of GME, the most widely used approach is
entanglement witnesses
\cite{HorodeckiQuantumentanglement2009, GuehneEntanglementdetection2009,
FriisEntanglementcertificationtheory2019}.
Mathematically, a GME witness is an observable
$W$ such that $\Tr(W \rho_s)\ge0$ for all biseparable states $\rho_s$, thus
certifying the existence of GME in $\rho$ when $\Tr(W \rho)<0$.
In practice, this is often implemented via fidelity estimation,
where $\rho$ is certified as entangled if its fidelity with a target
pure state $\ket{\psi}$ exceeds the maximum fidelity achievable by any
biseparable state $\alpha:=\max_{\rho_s}\ev{\rho_s}{\psi}$. This corresponds to
the GME witness $W=\alpha\mathbb{I}-\ketbra{\psi}$.
The advantage of fidelity-based GME witnesses is that the threshold value
$\alpha$ can be derived analytically,
while the drawback is that the set of
certifiable entangled states is severely
restricted \cite{WeilenmannEntanglementDetectionMeasuring2020}.

An alternative method is based on the PPT
relaxation of separable states, where it suffices to show that $W$ is a fully
decomposable witness \cite{JungnitschTamingMultiparticleEntanglement2011}, i.e.,
for every bipartition of the global system $S|S^c$
there exist some positive semidefinite $P_S, Q_S$ such that
$W=P_S+Q_S^{\Gamma_S}$, where $S^c=V\setminus S$ denotes the complement of $S$, and $\Gamma_S$ denotes the
partial transpose of subsystems $S$.
This observation allows the certification of GME
to be recast as an SDP \cite{JungnitschTamingMultiparticleEntanglement2011}.
Moreover, fully decomposable witnesses also provide a quantitative
characterization of the GME. We define the GME negativity $\mathcal{N}(\rho)$
as
\begin{equation}\label{eq:GME_negativity}
	\begin{aligned}
    \min_{p_S,\rho_S}: \quad & \sum_Sp_S\mathcal{N}_S(\rho_S)\\
    \text{s. t.}: \quad & \rho=\sum_Sp_S\rho_S,\\
    & p_S\ge 0,~\rho_S\succeq 0,~\Tr(\rho_S)=1,~\forall S,
	\end{aligned}
\end{equation}
where $\mathcal{N}_S(\rho_S)$ is the bipartite negativity with respect to the
bipartition $S|S^c$,
defined as the absolute sum of the negative eigenvalues of
$\rho_S^{\Gamma_S}$ \cite{VidalComputablemeasureentanglement2002,
PlenioLogarithmicNegativityFull2005},
and $\forall S$ refers to all nontrivial subsets $S$ of $V$.
In Appendix~\ref{app:GME_negativity}, we prove that $\mathcal{N}(\rho)$ is a
legitimate measure of
GME, and it can also be obtained by solving the following SDP:
\begin{equation}\label{eq:GME_negativity_dual}
	\begin{aligned}
		\max_{W,Q_S}: \quad& -\Tr(W\rho)\\
		\text{s. t.}: \quad
		& 0 \preceq Q_S \preceq \mathbb{I},~
        Q_S^{\Gamma_S}\preceq W,~\forall S.\\
	\end{aligned}
\end{equation}
The PPT-based approach offers a substantial advantage over the fidelity-based
method by detecting a much larger set of GME states. However, it
suffers from poor scalability: the scale of the associated SDP increases rapidly
with the network size and the local dimension of subsystems. Consequently, its
utility for quantum networks is limited, as it is only applicable to
quantum systems containing several qubits in practice.

In this work, we introduce the sub-symmetric witness (SSW) for the certification
of GME in quantum networks, which integrates the advantages of both
fidelity-based and PPT-based approaches. Our approach is applicable to
large-scale quantum networks and capable of certifying a broad class of
entangled network states.

%\section{Sub-symmetric witnesses}
\textit{Sub-symmetric witnesses.---}%
As previously noted, the network state can be modeled by an undirected graph
$G=(V,E)$, where each vertex $v\in V$ represents a party and each edge $e\in E$
corresponds to a shared state $\rho_e$. Rather than restricting $\rho_e$ to
qubit systems, we allow them to be general qudit states. This generalization
ensures that our results remain applicable to practical scenarios involving the
distribution of multi-copy or high-dimensional entangled states.
For notational convenience, we assume that all shared states $\rho_e$ 
have an identical local dimension, though our results generalize
straightforwardly to heterogeneous systems.

The central idea of our approach is to estimate the fidelities of all shared
states $\rho_e$ while simultaneously accounting for the correlations among these
estimations. This procedure yields a GME witness that is tractable both
theoretically and experimentally.
One remarkable result for the fidelity‐based witness is that it suffices to
carry out fidelity estimations with respect to maximally entangled states \cite{GuehneGeometryFaithfulEntanglement2021}.
Therefore, we define an SSW as a fully decomposable observable of the form
\begin{equation}\label{eq:SSW}
	W=\sum_{T\subseteq E}w_T\textstyle\bigotimes_{e\in T}\Phi_e\otimes \bigotimes_{e\notin T}(\mathbb{I}_e-\Phi_e),
\end{equation}
where $w_T$ are real numbers, 
and $\Phi_e=\ketbra{\phi^+}$ with $\ket{\phi^+}
=\frac{1}{\sqrt{d}}\sum_{i=0}^{d-1}\ket{ii}$   
denoting the maximally entangled state associated with edge $e$.
The term sub-symmetric denotes the property that the partial transposition on
either subsystem of an edge yields an identical result.
Henceforth, we treat these operations as equivalent and refer simply to the
partial transposition of an edge.

The sub-symmetric property enables us to characterize the partial transposition 
of parties $S$ by the corresponding set of partially 
transposed edges, denoted $T(S) \subseteq E$. For instance, in 
Fig.~\ref{fig:networks}(a), when party $\{A\}$ is partially transposed, 
the edge $(A,B)$ undergoes partial transposition, i.e., for $S=\{A\}$ we obtain 
$T(S)=\{(A,B)\}$. When parties $\{A,B\}$ are partially transposed, the edge 
$(A,B)$ is partially transposed twice and is therefore invariant, while the edge
$(B,C)$ is partially transposed once, resulting in $T(S)=\{(B,C)\}$. Generally,
we have
\begin{equation}
    T(S)=\qty{(u,v) \in E \mid u\in S, v\notin S \text{ or } u\notin S, v\in S}
    \label{eq:defTS}
\end{equation}
for the nontrivial subset $S$ of $V$.
These observations considerably simplify the characterization of SSWs. The
following theorem provides the foundation for all subsequent theoretical and
numerical analysis; see Appendix~\ref{app:lp} for the proof.
\begin{observation}\label{obs:lp}
	The observable $W$ in the form of Eq.~\eqref{eq:SSW} is an SSW if and only if
    for all subsystems $S$ there exist
    $q_T(S)\le w_T$ such that
    \begin{equation}\label{eq:linearConstraints}
        \sum_{T\subseteq T(S)} q_{T\cup\tilde{T}}(S) \mathbold{x}_T(S) \ge 0, \quad \forall \tilde{T}\subseteq E\setminus T(S),
    \end{equation}
    where 
    \begin{equation}
        \mathbold{x}_T(S) = \bigotimes_{e\in T} (1, -1) \otimes
        \bigotimes_{e\in T(S)\setminus T} (d-1, d+1),
    \end{equation}
    and the inequality is understood componentwise.
\end{observation}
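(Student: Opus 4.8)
The plan is to characterize when $W$ in the form of Eq.~\eqref{eq:SSW} is fully decomposable, i.e., when for every nontrivial bipartition $S|S^c$ there exist positive semidefinite $P_S, Q_S$ with $W = P_S + Q_S^{\Gamma_S}$. The first observation is that $W$ is built from the orthogonal projectors $\Phi_e$ and $\mathbb{I}_e - \Phi_e$, which partition the identity on each edge, so $W$ is block-diagonal in the product basis indexed by subsets $T \subseteq E$: on the ``block'' where edges in $T$ are in the $\Phi$-subspace and edges outside $T$ are in the $(\mathbb{I}-\Phi)$-subspace, $W$ acts as the scalar $w_T$ times a projector. The key spectral fact I would invoke is that $\Phi_e^{\Gamma_e} = \frac{1}{d}\,\mathbb{F}_e$ (the swap/flip operator), which has eigenvalues $+\tfrac1d$ on the symmetric subspace and $-\tfrac1d$ on the antisymmetric subspace; hence $\Phi_e^{\Gamma_e}$ has eigenvalues $+1$ (multiplicity $\binom{d+1}{2}$, wait—) more precisely eigenvalue $1/d\cdot(\pm1)$, so $\Phi_e$ and $\mathbb{I}_e-\Phi_e$ under partial transpose become simultaneously diagonalizable with eigenvalue pairs that I would record as the vector $(1,-1)$ for $\Phi_e^{\Gamma_e}$ and $(d-1,d+1)$ for $(\mathbb{I}_e-\Phi_e)^{\Gamma_e}$ (up to the overall $1/d$, tracking multiplicities on symmetric/antisymmetric parts). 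This is exactly the origin of the vectors $\mathbold{x}_T(S)$.

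**Reducing to a finite linear condition.** Next I would use the standard reduction for fully decomposable witnesses: $W = P_S + Q_S^{\Gamma_S}$ with $P_S, Q_S \succeq 0$ is equivalent to the existence of some $Q_S$ with $0 \preceq Q_S$ and $Q_S^{\Gamma_S} \preceq W$. Using the sub-symmetric structure from Eq.~\eqref{eq:defTS}, the partial transpose $\Gamma_S$ acts only on the edges in $T(S)$, and on each such edge it is the single-edge partial transpose described above. Because all the operators $\Phi_e, \mathbb{I}_e-\Phi_e$ commute across edges and $\Phi_e^{\Gamma_e}$ commutes with $\Phi_e$'s own spectral decomposition after transpose, the whole problem diagonalizes simultaneously. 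I would argue that the optimal $Q_S$ can be taken block-diagonal of the same form, $Q_S = \sum_{T\subseteq E} q_T(S)\,\bigotimes_{e\in T}\Phi_e \otimes \bigotimes_{e\notin T}(\mathbb{I}_e-\Phi_e)$ with $q_T(S) \in \mathbb{R}$ (using a twirling / symmetrization argument: averaging over the local unitaries that stabilize all $\Phi_e$ simultaneously preserves both $Q_S \succeq 0$ and $Q_S^{\Gamma_S} \preceq W$, and projects any $Q_S$ onto this commutant). Then $Q_S \succeq 0$ becomes $q_T(S) \ge 0$ for all $T$, and applying $\Gamma_S$ replaces, on edges $e \in T(S)$, the scalars by the eigenvalue vectors $(1,-1)$ or $(d-1,d+1)$ according to whether $e \in T$ or not — this is the componentwise expression $\sum_{T\subseteq T(S)} q_{T\cup\tilde T}(S)\,\mathbold{x}_T(S)$, where $\tilde T \subseteq E\setminus T(S)$ indexes the ``passive'' edges not touched by $\Gamma_S$ (on which $Q_S^{\Gamma_S}$ is still diagonal with entry $q_{T\cup\tilde T}(S)$). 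The constraint $Q_S^{\Gamma_S} \preceq W$ then reads exactly as Eq.~\eqref{eq:linearConstraints} together with $q_T(S) \le w_T$, after noting $W$ restricted to the relevant block is $w_{T\cup\tilde T}$ times a projector and that the nonnegativity $q_T(S)\ge 0$ is subsumed once one checks the $\tilde T$-block inequalities include the all-positive vector—actually I'd keep $q_T(S)\ge 0$ implicit in the bound structure or note it follows; the clean statement in the Observation packages it as $q_T(S)\le w_T$ plus the displayed componentwise inequalities.

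**Main obstacle and wrap-up.** The forward direction (given such $q_T(S)$, construct $Q_S$ and hence the decomposition) is then routine: set $Q_S$ as above, check $Q_S\succeq 0$ from $q_T(S)\ge 0$, check $Q_S^{\Gamma_S}\preceq W$ block by block from Eq.~\eqref{eq:linearConstraints}, and take $P_S = W - Q_S^{\Gamma_S} \succeq 0$. The genuinely delicate step — the main obstacle — is the converse: showing that if $W$ is fully decomposable then one may \emph{without loss of generality} restrict $Q_S$ to the sub-symmetric commutant form. I would handle this by identifying the group $\mathcal{G}_S$ of local unitaries of the form $\bigotimes_{e} (U_e \otimes \bar U_e)$ that fix every $\Phi_e$, noting that conjugation by $g\in\mathcal G_S$ commutes with $\Gamma_S$ up to a relabeling that still fixes $W$ (since $W$ lies in the commutant), so that the twirl $\int_{\mathcal G_S} g Q_S g^\dagger\, dg$ remains a valid feasible point and lands in the commutant; a second averaging over permutations of the symmetric/antisymmetric eigenspaces of each $\Phi_e^{\Gamma_e}$ then collapses it to the scalar-coefficient form. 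One must also verify that $W$ itself, being sub-symmetric and fully decomposable, indeed lies in this commutant so that the relabeling argument closes — this is where the ``identical result under partial transpose on either subsystem'' hypothesis is used. Once the commutant reduction is established, everything else is bookkeeping of eigenvalues, and the stated linear-programming characterization follows.
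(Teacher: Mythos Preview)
Your overall strategy---reduce $Q_S$ to a sub-symmetric form by twirling, then read off linear constraints from the spectra of $\Phi_e^{\Gamma_e}$ and $(\mathbb{I}_e-\Phi_e)^{\Gamma_e}$---is the same as the paper's. But you have the roles of $Q_S$ and $Q_S^{\Gamma_S}$ inverted, and this is not cosmetic: it leads you to the wrong constraints.

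In the paper it is $Q_S^{\Gamma_S}$, not $Q_S$, that is placed in the form $\sum_T q_T(S)\bigotimes_{e\in T}\Phi_e\otimes\bigotimes_{e\notin T}(\mathbb{I}_e-\Phi_e)$. With that choice, $W\succeq Q_S^{\Gamma_S}$ compares two operators diagonal in the \emph{same} $\Phi/(\mathbb{I}-\Phi)$ basis and collapses immediately to $w_T\ge q_T(S)$. Undoing the partial transpose then puts $Q_S$ into the $P^\pm$ basis on the edges of $T(S)$ with coefficients given by the $\mathbold{x}_T(S)$ vectors, and $Q_S\succeq 0$ becomes Eq.~\eqref{eq:linearConstraints}. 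Your version places $Q_S$ in the $\Phi$ basis, so $Q_S\succeq 0$ yields $q_T(S)\ge 0$---a constraint that appears nowhere in the Observation---while $Q_S^{\Gamma_S}$ lands in the $P^\pm$ basis on transposed edges. But $W$ is diagonal in the $\Phi$ basis there, so $Q_S^{\Gamma_S}\preceq W$ is \emph{not} the conjunction ``$q_T\le w_T$ plus Eq.~\eqref{eq:linearConstraints}'' as you assert; to compare them you must pass to the common refinement $\{\Phi,\,P^+-\Phi,\,P^-\}$, and the inequalities you get do not match the statement.

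The twirling step also needs more care. Conjugating $Q_S$ by $\bigotimes_e(U_e\otimes\bar U_e)$ preserves $Q_S\succeq 0$, but it does not directly preserve $Q_S^{\Gamma_S}\preceq W$: under $\Gamma_S$ the conjugation on each transposed edge becomes $\bar U_e\otimes\bar U_e$, which does not stabilize $\Phi_e$ and hence does not fix $W$. The paper avoids this by applying the $U\otimes U^*$ twirl $\mathcal{T}_e$ to the positive operator $W-Q_S^{\Gamma_S}$ on every edge (this fixes $W$ and preserves positivity, yielding a new $\tilde Q_S^{\Gamma_S}$ in the $\Phi$ basis), and then verifies $\tilde Q_S\succeq 0$ separately via the identity $\mathcal{T}(X^\Gamma)=[\mathcal{T}'(X)]^\Gamma$ linking the $U\otimes U^*$ and $U\otimes U$ twirls. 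Once you swap the roles of $Q_S$ and $Q_S^{\Gamma_S}$ and use this twirling argument, the rest of your bookkeeping goes through.
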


A direct application of Observation~\ref{obs:lp} is that  the certification of
GME can be formulated as a linear program (LP) instead of an SDP. Remarkably,
the number of variables in the LP is independent of the local dimension $d$, so
the complexity does not increase in practical scenarios involving the
distribution of multi-copy or high-dimensional entangled states. Moreover, we
show that the negativity of GME can be lower bounded by the following LP,
with the bound being tight when the state distribution process is
subject to white noise:
\begin{equation}\label{eq:negativity}
	\begin{aligned}
		\max_{w_T,q_T}: \quad& -\sum_{T \subseteq E}w_Tp_T\\
		\text{s. t.}: \quad & 
		q_T(S) \le w_T,\\
		&0\le\sum_{T\subseteq T(S)} q_{T\cup\tilde{T}}(S) \vect{x}_T(S)\le d^{\abs{T(S)}},\\
	\end{aligned}
\end{equation}
where the constraints run over all $T \subseteq E$, $\tilde{T}\subseteq
E\setminus T(S)$, and $S \ne \emptyset \text{ or }V$,
$\abs{T(S)}$ denotes the number of edges in $T(S)$,
and $p_T=\prod_{e\in T}\Tr(\rho_e\Phi_e)
\prod_{e\notin T}(1-\Tr(\rho_e\Phi_e))$
\footnote{This also applies to general $\rho$ instead of $\rho=\bigotimes_{e \in
E} \rho_e$, where $p_T=\Tr(\rho \qty[\bigotimes_{e \in T} \Phi_e \otimes
\bigotimes_{e \notin T}(\mathbb{I}_e-\Phi_e)])$.}.

Modern LP solvers are capable of handling instances with over a billion variables \cite{ApplegatePDLPPracticalFirst2025},
which enables the application of our approach to networks comprising several
dozen parties. The result can be further strengthened by exploiting the
topological structure of the quantum network.
On the one hand, for networks with specific topologies, such as tree
and ring networks, the numbers of variables and constraints scale only polynomially
with the size of the network, which makes the numerical algorithms much more
efficient. On the other hand, we establish a set of general SSWs determined by
the network topology, which eliminates the need for optimization.
These simplifications are grounded in a graph-theoretical analysis of SSWs.

%\section{Graph-theoretical analysis of SSWs}
\textit{Graph-theoretical analysis of SSWs.---}%
Observation~\ref{obs:lp} provides a complete characterization of SSWs, from
which it is evident that the constraints in Eq.~\eqref{eq:linearConstraints} are
entirely determined by $T(S)$. In graph theory, $T(S)$ is the so-called cut
for $S$. Together with $\emptyset$, all cuts $K$ form a set
$\mathcal{B}(G)$ called the cut space of $G$.
Crucially, the cut space constitutes a vector space over the two-element field
$\mathbb{F}_2$ and serves as the orthogonal complement of the cycle space
\cite{DiestelGraphtheory2025}.

The main motivation for considering the cut space $\mathcal{B}(G)$ rather than
all subsystems $S$ is that the cut space reveals additional symmetries beyond
those of the underlying graph $G$.
In addition, the redundancies arising from the equivalence of partial transposes
with respect to $S$ and $S^c$ are automatically removed in
$\mathcal{B}(G)$.
With the cut space, both Observation~\ref{obs:lp} and
Eq.~\eqref{eq:negativity} can be simplified accordingly.
In particular, Observation~\ref{obs:lp} can be reformulated as follows:
for all nonempty $K\in\mathcal{B}(G)$, there exist $q_T(K)\le w_T$ such that
\begin{equation}\label{eq:linearConstraintsK}
  \sum_{T\subseteq K} q_{T\cup\tilde{T}}(K)\vect{x}_T(K)\ge 0,
  \quad \forall\tilde{T}\subseteq E\setminus K,
\end{equation}
with  $\vect{x}_T(K)=\bigotimes_{e\in T} \qty(1,-1) \otimes \bigotimes_{e \in K\setminus T}\qty(d-1,d+1).$

Recall that in our model each edge $e \in E$ corresponds to a bipartite quantum
state $\rho_e$. Based on these $\rho_e$, networks with different topologies can
be constructed
\footnote{Due to the sub-symmetric property of SSWs, we can always assume that
$V\rho_eV^\dagger=\rho_e$, where $V$ is the swap operator.}.
For example, with three edges there are five distinct
topologies, as illustrated in Fig.~\ref{fig:differentTopology}. 
We denote by $\mathcal{G}_N$ the set of all graphs with edge set
$\{1,2,\ldots,N\}$.
The cut space naturally induces a partial order on $\mathcal{G}_N$.
Concretely, we say that $G_1 \prec G_2$ whenever the cut space of
$G_1$ is contained in that of $G_2$, i.e., $\mathcal{B}(G_1) \subseteq
\mathcal{B}(G_2)$. This, in turn, gives rise to an equivalence relation:
two graphs $G_1,G_2\in\mathcal{G}_N$ are considered equivalent, written $G_1
\sim G_2$ whenever $\mathcal{B}(G_1) = \mathcal{B}(G_2)$.

With these definitions in place, we proceed to establish several key
observations that formally characterize the relationships among SSWs, network
topology, and network entanglement.
The detailed proofs are provided in Appendix~\ref{app:property}.

\begin{observation}\label{obs:order}
Let $G_1, G_2 \in \mathcal{G}_N$.\\
(a) If $G_1 \prec G_2$, any SSW for $G_2$ is also an SSW for $G_1$;\\
(b) If $G_1 \sim G_2$, the sets of SSWs for $G_1$ and $G_2$ coincide.
\end{observation}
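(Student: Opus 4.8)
The plan is to reduce both statements to the cut-space form of Observation~\ref{obs:lp}, that is, to the feasibility criterion in Eq.~\eqref{eq:linearConstraintsK}. The first thing to record is that, since $G_1,G_2\in\mathcal{G}_N$ share the edge set $\{1,\dots,N\}$, the family of operators of the form Eq.~\eqref{eq:SSW} is literally the same for the two graphs, and the coefficient vectors $\vect{x}_T(K)$ depend on a graph only through the cut $K$ itself and the common local dimension $d$. Hence the only way ``$W$ is an SSW for $G_1$'' and ``$W$ is an SSW for $G_2$'' can differ is through the extra requirement of full decomposability, and by Eq.~\eqref{eq:linearConstraintsK} that requirement is, for any graph $G$, exactly the conjunction over all nonempty cuts $K\in\mathcal{B}(G)$ of the assertion that there exist reals $q_T(K)\le w_T$ with $\sum_{T\subseteq K}q_{T\cup\tilde T}(K)\,\vect{x}_T(K)\ge 0$ for every $\tilde T\subseteq E\setminus K$. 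Each clause of this conjunction depends only on the single cut $K$ and on the fixed coefficients $\{w_T\}$ of $W$, and never on the vertex set or the incidence structure of $G$; this is the key structural observation.

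Part~(a) is then immediate. Suppose $G_1\prec G_2$, i.e.\ $\mathcal{B}(G_1)\subseteq\mathcal{B}(G_2)$, and let $W$ be an SSW for $G_2$. By the reformulation above, the defining conjunction holds for every nonempty $K\in\mathcal{B}(G_2)$, hence \emph{a fortiori} for every nonempty $K\in\mathcal{B}(G_1)$; reading Eq.~\eqref{eq:linearConstraintsK} in the reverse direction shows that $W$ is an SSW for $G_1$. Part~(b) follows at once: $G_1\sim G_2$ means $G_1\prec G_2$ and $G_2\prec G_1$ simultaneously, so part~(a), applied in both directions, gives the two inclusions between the sets of SSWs, hence their equality.

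I do not anticipate a genuine obstacle: once Observation~\ref{obs:lp} and its cut-space reformulation Eq.~\eqref{eq:linearConstraintsK} are available, the argument is bookkeeping about where the quantifiers over cuts range. The only two points worth an explicit sentence are (i) that the operators in Eq.~\eqref{eq:SSW} and the vectors $\vect{x}_T(K)$ are determined by the edge set and the cuts alone, so the SSW property genuinely ``lives on'' $\mathcal{B}(G)$ rather than on $G$; and (ii) that, since every graph here is connected, the nonempty cuts are exactly the sets $T(S)$ for nontrivial $S$ (with $S$ and $S^{c}$ producing the same cut), so that Eq.~\eqref{eq:linearConstraintsK} does capture all the bipartition constraints in the definition of an SSW.
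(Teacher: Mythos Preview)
Your proposal is correct and follows essentially the same route as the paper: both arguments reduce the SSW property to the cut-space criterion Eq.~\eqref{eq:linearConstraintsK}, observe that the constraints are indexed by nonempty $K\in\mathcal{B}(G)$, and then use $\mathcal{B}(G_1)\subseteq\mathcal{B}(G_2)$ to conclude. Your additional remarks that the operators in Eq.~\eqref{eq:SSW} and the vectors $\vect{x}_T(K)$ depend only on the common edge set and on $K$, and that connectedness ensures the nonempty cuts are exactly the $T(S)$ for nontrivial $S$, are correct clarifications that the paper leaves implicit.
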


Recall that a graph is called a tree if it is connected and contains no cycles.
Moreover, the cut space of a graph is the orthogonal complement of its cycle
space. Hence, for a tree graph $G_\mathrm{tree}$, $\mathcal{B}(G_\mathrm{tree})$
contains all subsets of $\mathcal{G}_N$, resulting in the following observation.

\begin{observation}\label{obs:tree}
(a) Any SSW for a tree network in $\mathcal{G}_N$ is also an SSW for any other
network $G\in\mathcal{G}_N$.\\
(b) The set of SSWs for all tree networks in $\mathcal{G}_N$ coincide.
\end{observation}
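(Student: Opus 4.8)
The plan is to reduce both parts to Observation~\ref{obs:order} by showing that every tree network sits at the top of the partial order $\prec$ on $\mathcal{G}_N$, i.e., that its cut space is the full space $\mathbb{F}_2^N$. The key step is the identification $\mathcal{B}(G_\mathrm{tree})=\mathbb{F}_2^N$ for any tree $G_\mathrm{tree}\in\mathcal{G}_N$: a tree contains no cycles, so its cycle space is $\{0\}$, and since the cut space is the orthogonal complement of the cycle space, $\mathcal{B}(G_\mathrm{tree})=\{0\}^\perp=\mathbb{F}_2^N$, i.e., every subset of the edge set $\{1,\dots,N\}$ is a cut of $G_\mathrm{tree}$. (Concretely, deleting an edge $e$ of a tree splits it into exactly two components for which $e$ is the only crossing edge, so every singleton $\{e\}$ is a cut, and these span $\mathbb{F}_2^N$ over $\mathbb{F}_2$.)

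For part~(a), I would then note that for an arbitrary $G\in\mathcal{G}_N$ the cut space $\mathcal{B}(G)$ is, by construction, a subspace of $\mathbb{F}_2^N$, so $\mathcal{B}(G)\subseteq\mathcal{B}(G_\mathrm{tree})$, which is precisely $G\prec G_\mathrm{tree}$. Observation~\ref{obs:order}(a) then immediately yields that any SSW for $G_\mathrm{tree}$ is also an SSW for $G$.

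For part~(b), I would observe that any two tree networks $G_\mathrm{tree}^{(1)},G_\mathrm{tree}^{(2)}\in\mathcal{G}_N$ share the same cut space $\mathbb{F}_2^N$ by the first step, hence $G_\mathrm{tree}^{(1)}\sim G_\mathrm{tree}^{(2)}$, and Observation~\ref{obs:order}(b) gives that their SSW sets coincide; alternatively this follows from part~(a) applied with each of the two trees in turn. I do not anticipate any real obstacle here: the only point that needs care is the equality $\mathcal{B}(G_\mathrm{tree})=\mathbb{F}_2^N$, which rests entirely on the acyclicity of trees together with the cut-space/cycle-space duality already quoted, after which everything is a direct appeal to Observation~\ref{obs:order}.
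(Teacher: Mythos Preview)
Your proposal is correct and matches the paper's own proof essentially line for line: the paper also argues that a tree has trivial cycle space, hence $\mathcal{B}(G_\mathrm{tree})$ is the full edge space, so $G\prec G_\mathrm{tree}$ for every $G\in\mathcal{G}_N$, and then invokes Observation~\ref{obs:order}. Your extra remark that the singleton cuts $\{e\}$ already span $\mathbb{F}_2^N$ is a nice concrete justification that the paper does not spell out.
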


\begin{figure}
    \centering
    \includegraphics[width=0.45\textwidth]{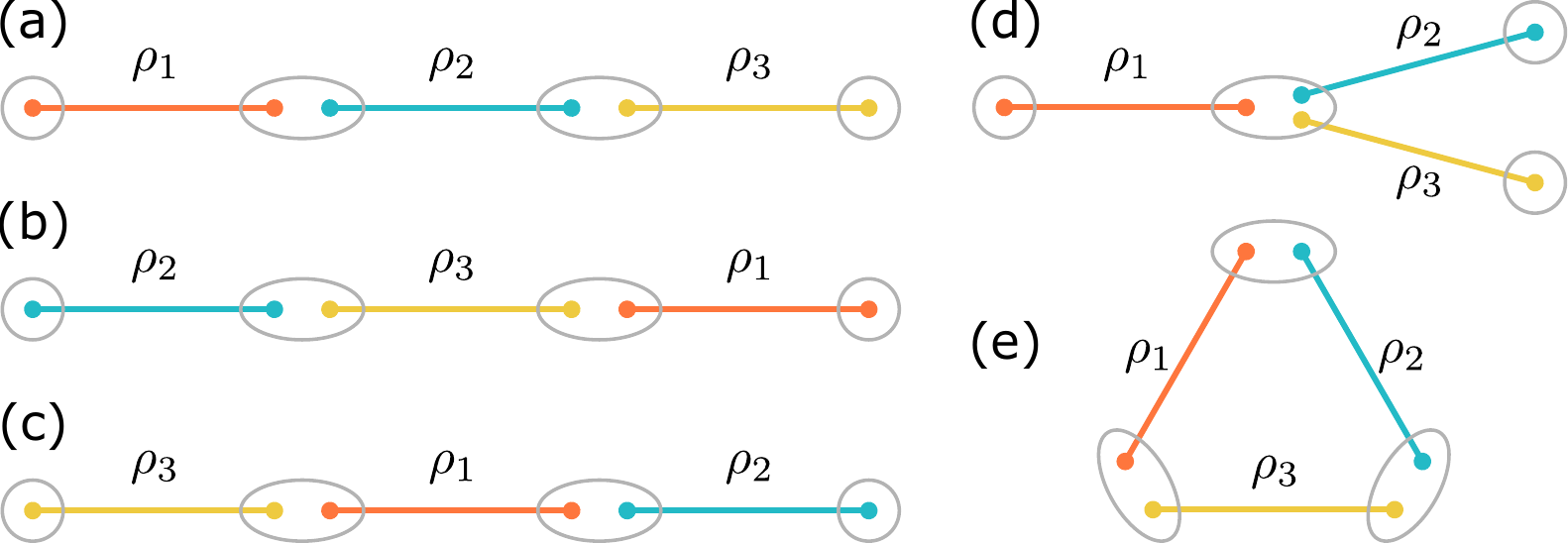}
    \caption{Five distinct topologies for a graph with three different edges
    $\rho_1,\rho_2,\rho_3$.}
    \label{fig:differentTopology}
\end{figure}

In practice, it is often a good approximation to assume that 
each $\rho_e$ is a white-noised state, i.e.,
$\rho_e=p_e\Phi+(1-p_e)\mathbb{I}/d^2$.
In this situation, Observation~\ref{obs:order} can also be formulated in a quantitative
manner: (a) If $G_1 \prec G_2$, the GME negativity of $G_1$
is not smaller than that of $G_2$;
(b) If $G_1 \sim G_2$, the GME negativity of $G_1$
is equal to that of $G_2$.
Note that constructing a connected network with $N+1$ parties
requires at least $N$ entangled pairs, which occurs precisely when
the network has a tree topology. This further leads to the following
surprising result on the entanglement of quantum networks.

\begin{observation}\label{obs:tree_negativity}
    Suppose that an $(N+1)$-party quantum network is constructed by distributing
    $N$ entangled pairs, where each distributed state is a white-noised
    maximally entangled state. Then, the GME of the network quantified by the
    negativity is independent of the topology.
\end{observation}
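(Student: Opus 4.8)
The plan is to reduce the statement to two ingredients already in place: that all tree networks in $\mathcal{G}_N$ carry the same SSWs (Observation~\ref{obs:tree}), and that the negativity LP~\eqref{eq:negativity} is tight for white-noised distributions. First I would dispose of the graph-theoretic step. Modeling the network as an undirected graph whose $N+1$ vertices are the parties and whose $N$ edges are the distributed pairs, connectivity (assumed throughout) together with the elementary fact that a connected graph on $N+1$ vertices has at least $N$ edges --- with equality exactly for trees --- forces the underlying graph to be a tree; in particular no two pairs can be shared by the same two parties, since that would leave at most $N-1$ distinct connections and hence a disconnected graph. So every network in the statement is one of the trees $G_{\mathrm{tree}}\in\mathcal{G}_N$, with the $N$ isotropic states $\rho_e=p_e\Phi+(1-p_e)\mathbb{I}/d^2$ sitting on its labeled edges.

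Next I would argue that $\mathcal{N}(\rho)$ for $\rho=\bigotimes_{e\in E}\rho_e$ depends on $G_{\mathrm{tree}}$ only through data that is common to all trees. By the tightness of Eq.~\eqref{eq:negativity} under white noise, $\mathcal{N}(\rho)$ equals the optimal value of that LP, and this LP sees the network in only two places: the objective coefficients $p_T=\prod_{e\in T}\Tr(\rho_e\Phi_e)\prod_{e\notin T}(1-\Tr(\rho_e\Phi_e))$, and the family of cuts $T(S)$ entering the constraints --- equivalently, by the reformulation in Eq.~\eqref{eq:linearConstraintsK}, the cut space $\mathcal{B}(G)$. The coefficients $p_T$ are manifestly topology-independent: they are fixed once one knows which labeled edge carries which state, regardless of how the edges are wired. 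For the constraints, recall that the cut space is the orthogonal complement of the cycle space; since a tree has trivial cycle space, $\mathcal{B}(G_{\mathrm{tree}})$ equals all of $\mathbb{F}_2^{E}$, i.e. every subset of $E$ is a cut, for every tree in $\mathcal{G}_N$. Hence the LP~\eqref{eq:negativity} --- objective and feasible region in the variables $(w_T,q_T)$ alike --- is literally one and the same for all tree topologies (this is exactly Observation~\ref{obs:tree}(b) read at the level of negativity), so its optimum, and therefore $\mathcal{N}(\rho)$, is independent of the topology.

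The only real obstacle is the tightness claim invoked above: that the SSW-restricted LP~\eqref{eq:negativity} attains $\mathcal{N}(\rho)$ rather than merely lower-bounding it when each $\rho_e$ is white-noised. This is where the white-noise hypothesis is indispensable: one must show that an optimal PPT-mixture decomposition in Eq.~\eqref{eq:GME_negativity} --- equivalently, an optimal fully decomposable witness in Eq.~\eqref{eq:GME_negativity_dual} --- can be taken within the sub-symmetric family, which should follow from the $U\otimes\bar U$ invariance of each isotropic state $p_e\Phi+(1-p_e)\mathbb{I}/d^2$ by a twirling argument, collapsing the general witness onto the $\Phi_e$/$(\mathbb{I}_e-\Phi_e)$ structure of Eq.~\eqref{eq:SSW} without loss. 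For a general product of entangled states this need not hold, and then the conclusion can genuinely fail; with white noise it goes through, which is precisely the content of the appendix establishing Eq.~\eqref{eq:negativity}. Granting that, the remainder is the bookkeeping above: a connected $(N+1)$-party network built from $N$ entangled pairs is a tree, and both the objective and the constraints of the negativity LP see only the topology-independent edge data and the tree-independent full cut space.
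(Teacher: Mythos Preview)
Your proposal is correct and follows essentially the same route as the paper: an $(N{+}1)$-party connected network with $N$ edges is a tree; for white-noised edges the LP~\eqref{eq:negativity} is tight (via the twirling argument in Appendix~\ref{app:lp}); and since all trees share the full cut space $\mathcal{B}(G_{\mathrm{tree}})=\mathbb{F}_2^E$ while the objective coefficients $p_T$ depend only on the labeled edge data, the LP and hence $\mathcal{N}(\rho)$ are identical across tree topologies. The paper phrases this as the quantitative analogue of Observation~\ref{obs:order}(b) specialized to trees, which is precisely the argument you give.
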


These graph-theoretical analysis
can be applied to simplify the construction of SSWs, from both an
analytical and a numerical perspective.
The basic idea is that, given a graph $G$, one can consider
an equivalent graph $\tilde{G}$ that is significantly more symmetric. In graph
theory, this corresponds to the notion of $2$-isomorphism \cite{Whitney2IsomorphicGraphs1933}.
For instance, any tree network can be assumed, without loss of generality, to be
a star network.

Analytically, we construct two families of GME witnesses that are strong in
practical quantum networks.
The first family, originating from tree networks, is topology-independent and
referred to as the $E$-SSW:
\begin{equation}\label{eq:E-SSW}
     W_E=-d\bigotimes_{e \in E} \Phi_e + 
     \frac{d+1}{2}\bigotimes_{e\in E}\frac{\mathbb{I}_e+d\Phi_e}{d+1}.
\end{equation}
The second family, designed to fully capture the network's topology,
is referred to as the $\mathcal{B}(G)$-SSW:
\begin{equation}\label{eq:BG-SSW}
	W_{\mathcal{B}(G)} = -\bigotimes_{e\in E}\Phi_e
    + \sum_{\emptyset \ne K \in \mathcal{B}(G)}
    \bigotimes_{e\in K}\frac{\mathbb{I}_e-\Phi_e}{d-1}
    \otimes \bigotimes_{e\notin K}\Phi_e.
\end{equation}
Please see the proof in Appendix~\ref{app:SSW_family}. 
The strength of these witnesses will be demonstrated with examples in the
following section.

Numerically, the additional symmetries revealed by analyzing the cut space can
be exploited to reduce the number of variables and constraints in the
corresponding LP formulation. For example, for tree and ring networks, the
search for SSWs can be cast as an LP involving $O(\abs{E}^3)$ variables and
constraints. More general formulations and illustrative examples are provided in
Appendix~\ref{app:cactus}.

%\section{Examples}
\textit{Examples.---}%
We demonstrate the utility of SSWs by applying them to networks where all
edges are subjected to white noise of equal strength. Specifically, the edge
state is modeled by $\rho_e=p\Phi_e+(1-p){\mathbb{I}_e}/{d^2}$ with the
visibility $p$ independent of the edge $e$.
This is referred to as isotropic pair-entangled network (IPEN) states in
Refs.~\cite{ContrerasTejadaAsymptoticSurvivalGenuine2022,
LledoAsymptoticrobustnessentanglement2024}.

We begin by considering the simplest case of tree networks.
Since the cut space in tree networks contains all subsets of $E$, the
optimization problem defined in Eq.~\eqref{eq:negativity} exhibits high
symmetry. For the IPEN state, Eq.~\eqref{eq:negativity} not only
gives the exact value of GME negativity, moreover it can also be analytically
solved and is precisely expressed by $W_E$ in Eq.~\eqref{eq:E-SSW}.

\begin{observation}\label{obs:IPEN_tree}
	The IPEN state in tree networks has the GME negativity
    \begin{equation}
        \mathcal{N}(\rho)=\max\qty{0,-\Tr(W_E\rho)}.
    \end{equation}
    Especially, $\rho$ is GME whenever
    \begin{equation}\label{eq:opt_tree}
        \qty(1-\frac{\nu_d}{d})^\abs{E}>\frac{d+1}{2d},
    \end{equation}
    where $\nu_d=\frac{(d-1)(1-p)}{d-(d-1)(1-p)}$.
\end{observation}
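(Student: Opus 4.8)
\emph{Proof sketch.}

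The plan is to establish the negativity formula first; the ``especially'' clause then follows by evaluating $\Tr(W_E\rho)$. Since every tree in $\mathcal{G}_{\abs{E}}$ has cut space equal to the full power set $2^E$ (the remark preceding Observation~\ref{obs:tree}), the IPEN state $\rho=\bigotimes_{e\in E}\rho_e$ on a tree is invariant under arbitrary relabelings of the edges, and I will use this maximally symmetric picture throughout. By the stated tightness of the linear program~\eqref{eq:negativity} under white noise, $\mathcal{N}(\rho)$ equals its optimal value; in the cut-space form~\eqref{eq:linearConstraintsK} the constraints there run over all nonempty $K\subseteq E$.

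For the lower bound, $W_E$ of Eq.~\eqref{eq:E-SSW} is a feasible point of~\eqref{eq:negativity} (Appendix~\ref{app:SSW_family}), so $\mathcal{N}(\rho)\ge-\Tr(W_E\rho)$, and trivially $\mathcal{N}(\rho)\ge0$. To evaluate the bound, put $f=\Tr(\rho_e\Phi_e)=p+(1-p)/d^2$; since $\rho$ and $W_E$ factorize over the edges, rewriting $\mathbb{I}_e+d\Phi_e=(d+1)\Phi_e+(\mathbb{I}_e-\Phi_e)$ and expanding the product gives $-\Tr(W_E\rho)=d\,f^{\abs{E}}-\tfrac{d+1}{2}\big(\tfrac{1+df}{d+1}\big)^{\abs{E}}$. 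A one-line algebraic identity yields $f=(1-\nu_d/d)\,\tfrac{1+df}{d+1}$ with $\nu_d$ as in the statement, so $-\Tr(W_E\rho)=\big(\tfrac{1+df}{d+1}\big)^{\abs{E}}\big[\,d(1-\nu_d/d)^{\abs{E}}-\tfrac{d+1}{2}\,\big]$; since the prefactor is positive, this is positive exactly when Eq.~\eqref{eq:opt_tree} holds. Combined with the negativity formula -- or directly, since $W_E$ is a GME witness -- this gives the ``especially'' clause.

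The crux is the matching upper bound $\mathcal{N}(\rho)\le\max\{0,-\Tr(W_E\rho)\}$, equivalently the optimality of $W_E$ (or of $W=0$) in~\eqref{eq:negativity}. I would exploit the $\mathrm{Sym}(E)$ symmetry: the objective coefficients $p_T$ depend only on $\abs{T}$ and, because $2^E$ is permutation-invariant, the constraint families of~\eqref{eq:linearConstraintsK} are permutation-covariant, so averaging an optimal solution over $\mathrm{Sym}(E)$ yields an equally good one with $w_T=w_{\abs{T}}$ and with the auxiliary variables $q_{T\cup\tilde T}(K)$ depending only on $(\abs{K},\abs{T},\abs{\tilde T})$; the whole program thereby collapses to an LP in $O(\abs{E}^3)$ variables and constraints (the reduction underlying the tree/ring bounds of Appendix~\ref{app:cactus}). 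It then remains to solve this reduced LP in closed form and certify that the cardinality profile $(w_{\abs{T}})$ read off from $W_E$ is optimal. I would do this via a complementary dual certificate, namely an explicit optimal biseparable decomposition $\rho=\sum_K p_K\rho_K$ in the sense of~\eqref{eq:GME_negativity} with $\sum_K p_K\mathcal{N}_K(\rho_K)=-\Tr(W_E\rho)$, the states $\rho_K$ being dictated by complementary slackness with $W_E$ (the operators $Q_K^{\Gamma_K}$ entering $W_E$ should saturate on them); failing a clean closed form, a direct Fourier/cardinality analysis of the $O(\abs{E}^3)$ inequalities would also do. I expect this optimality certificate to be the only genuinely delicate step -- the rest is bookkeeping with binomial sums and the cut space of a tree.
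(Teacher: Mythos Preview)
Your strategy is the paper's: lower bound from feasibility of $W_E$ in~\eqref{eq:GME_negativity_dual}, matching upper bound from an explicit decomposition achieving equality in~\eqref{eq:GME_negativity}; your evaluation of $-\Tr(W_E\rho)$ and the derivation of~\eqref{eq:opt_tree} are correct. The one place where the paper differs is that it bypasses the LP symmetry reduction you outline and writes the dual certificate down directly: with $\gamma=\tfrac{1-p}{1+(d^2-1)p}$ and
\[
\rho_T=\bigotimes_{e\in T}(\mathbb{I}_e-\Phi_e)+(d+1)(d-1)^{\abs{T}-1}\bigotimes_{e\in T}\Phi_e,
\]
one checks $\rho_T\succeq0$, $\rho_T^{\Gamma_T}\succeq0$, and
\[
\rho=-\frac{2\Tr(W_E\rho)}{d-1}\bigotimes_{e\in E}\Phi_e+\frac{1}{[1+(d^2-1)\gamma]^{\abs{E}}}\sum_{\emptyset\ne T\subseteq E}\gamma^{\abs{T}}\,\rho_T\otimes\bigotimes_{e\notin T}\Phi_e.
\]
In a tree every nonempty $T\subseteq E$ is a cut, so each $\rho_T$-term contributes zero negativity in~\eqref{eq:GME_negativity}; the remaining term carries weight $-\tfrac{2}{d-1}\Tr(W_E\rho)$ and $\mathcal{N}_S\big(\bigotimes_e\Phi_e\big)=\tfrac{d-1}{2}$ for a single-edge cut $S$, yielding $\mathcal{N}(\rho)\le-\Tr(W_E\rho)$ when that weight is nonnegative and $\mathcal{N}(\rho)=0$ otherwise. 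Your complementary-slackness heuristic would eventually lead here, but guessing $\rho_T$ directly from the binomial structure of $W_E$ is shorter than routing through the symmetrized LP.
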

The left-hand side of Eq.~\eqref{eq:opt_tree} converges to
zero as $\abs{E}\to\infty$ unless $p=1$. This phenomenon is termed the
asymptotic non-survival of GME, i.e., the IPEN state in tree networks is never
GME when $\abs{E}\to\infty$ unless $p=1$
\footnote{Note that Ref. \cite{ContrerasTejadaAsymptoticSurvivalGenuine2022}
establishes a stronger non-survival result for GME, formulated in terms of
biseparability rather than PPT mixtures.}.
This outcome is expected because, in a tree
network, the addition of a new edge necessarily introduces an extra party,
making it unlikely that GME can be sustained as the number of parties grows
indefinitely. However, Observation~\ref{obs:IPEN_tree} offers a less pessimistic
view by implying that the visibility of $\rho_e$ does not
need to exponentially approach one. More precisely, to ensure that the tree
network is GME, it suffices to take $1-p$ to be $O(d\abs{E}^{-1})$.
This rate, which only scales inverse-polynomially with the number of edges or parties,
represents a less demanding condition for maintaining GME and can be
further mitigated by utilizing high-dimensional entanglement.
Moreover, Observation~\ref{obs:tree} implies that an order of $O(d\abs{E}^{-1})$
suffices to ensure the presence of GME in any network topology.

\begin{figure}
    \centering
    \includegraphics[width=0.45\textwidth]{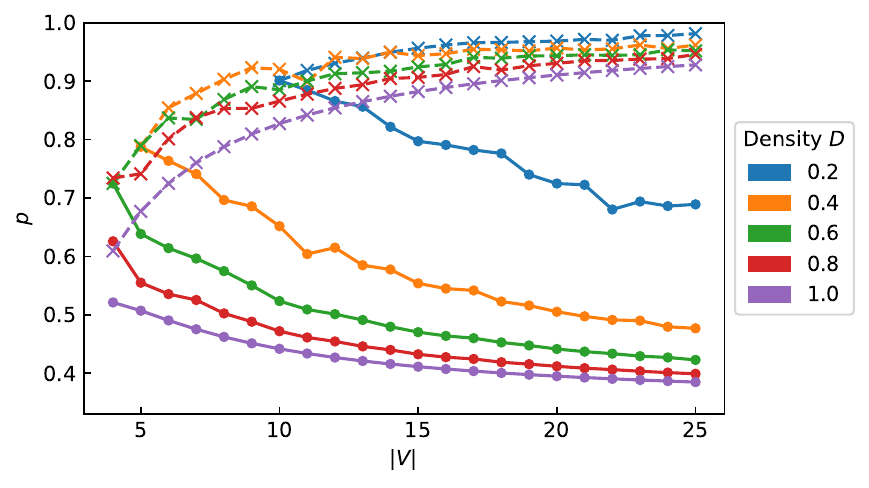}
    \caption{Average visibilities for random graphs obtained using SSWs and
    fidelity-based entanglement witnesses.
    The solid lines and dashed lines denote the SSWs and the fidelity-based witnesses, respectively.
    The visibility $p$ represents the
    lower bound above which the network state is certified as GME by the
    corresponding witness. The graph density $D$ denotes the ratio of the number
    of edges $\abs{E}$ to the maximum possible number of edges, i.e.,
    $D=\abs{E}/\binom{\abs{V}}{2}$.
    }
    \label{fig:randomgraph}
\end{figure}

In contrast to tree networks, where new edges introduce new parties, 
a strategy to enhance the GME in a quantum network is to add more edges without
increasing the number of parties.
By employing the $\mathcal{B}(G)$-SSWs defined in Eq.~\eqref{eq:BG-SSW}, we are
able to show that this strategy can guarantee the asymptotic survival is
guaranteed whenever the underlying $\rho_e$ is entangled (i.e., $p>1/(d+1)$).
Please see the proof in Appendix~\ref{app:IPEN}.
\begin{observation}\label{obs:asymptotic}
The IPEN states constructed from entangled $\rho_e$ exhibit GME as
$\abs{V}\to\infty$, provided that each party is asymptotically connected
to more than half of the remaining parties, i.e.,
$\lim_{\abs{V}\to\infty}\frac{\deg(v)}{\abs{V}}>\frac{1}{2}$ for
every $v\in V$.
\end{observation}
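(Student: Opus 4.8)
The plan is to apply the $\mathcal{B}(G)$-SSW $W_{\mathcal{B}(G)}$ of Eq.~\eqref{eq:BG-SSW}, which Appendix~\ref{app:SSW_family} certifies as a legitimate GME witness, and to show that $\Tr(W_{\mathcal{B}(G)}\rho)<0$ for the IPEN state $\rho=\bigotimes_{e\in E}\rho_e$ once $\abs{V}$ is large enough. First I would evaluate this trace. Let $f=\Tr(\Phi_e\rho_e)=p+(1-p)/d^2$ be the (edge-independent) singlet fidelity; using $\Tr\bigl(\rho_e(\mathbb{I}_e-\Phi_e)/(d-1)\bigr)=(1-f)/(d-1)$ and the product form of $\rho$, one gets
\begin{equation}
  \Tr(W_{\mathcal{B}(G)}\rho)=f^{\abs{E}}\Bigl(\sum_{\emptyset\ne K\in\mathcal{B}(G)}g^{\abs{K}}-1\Bigr),\qquad g:=\frac{1-f}{(d-1)f}.
\end{equation}
Since $\rho_e$ is entangled iff $f>1/d$ iff $p>1/(d+1)$, the entanglement hypothesis is precisely $0<g<1$, and $f>0$; hence the statement reduces to the purely graph-theoretic claim that $\sum_{\emptyset\ne K\in\mathcal{B}(G)}g^{\abs{K}}<1$ for all sufficiently large $\abs{V}$.

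Next I would bound this cut-space sum. Every nonempty $K\in\mathcal{B}(G)$ equals $T(S)$ for some $\emptyset\ne S\subsetneq V$, and the substitution $S\leftrightarrow S^c$ leaves $T(S)$ unchanged, so it suffices to sum over $\abs{S}\le\abs{V}/2$ at the cost of a factor $2$. For such $S$, each $v\in S$ has at least $\deg(v)-(\abs{S}-1)$ neighbors outside $S$, so $\abs{T(S)}\ge\abs{S}\,(\delta-\abs{S}+1)$ with $\delta=\min_v\deg(v)$. The hypothesis supplies an $\epsilon>0$ with $\delta\ge(\tfrac12+\epsilon)\abs{V}$ for all large $\abs{V}$, which for $\abs{S}\le\abs{V}/2$ gives $\abs{T(S)}\ge\epsilon\abs{V}\abs{S}$. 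Writing $n=\abs{V}$ and using $\binom{n}{k}\le n^k$,
\begin{equation}
  \sum_{\emptyset\ne K\in\mathcal{B}(G)}g^{\abs{K}}\le 2\sum_{k=1}^{\lfloor n/2\rfloor}\binom{n}{k}g^{\epsilon n k}\le 2\sum_{k\ge1}\bigl(n\,g^{\epsilon n}\bigr)^{k}.
\end{equation}
Because $0<g<1$ is fixed, $n\,g^{\epsilon n}\to0$, so for large $n$ the geometric series is finite and equals $2n g^{\epsilon n}/(1-n g^{\epsilon n})\to0$; in particular it drops below $1$, which yields $\Tr(W_{\mathcal{B}(G)}\rho)<0$ and certifies GME.

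The one genuine point to watch is that the entropy factor $\binom{n}{k}$, which is as large as $\sim 2^{n}$ near $k\approx n/2$, must be overwhelmed by the decay $g^{\abs{T(S)}}$; the resolution is exactly the min-degree estimate above, which forces every cut to have size at least linear in $n$ times the size of its smaller side, so $g^{\abs{T(S)}}$ shrinks in $n$ faster than $\binom{n}{k}$ grows (equivalently, one may invoke Karger's bound on the number of near-minimum cuts, using that $\delta>n/2$ makes the minimum cut $\Omega(n)$). A secondary subtlety is the precise reading of ``$\lim_{\abs{V}\to\infty}\deg(v)/\abs{V}>\tfrac12$'': the argument uses its uniform form, i.e.\ a single $\epsilon>0$ valid for all vertices and all large networks, which is the intended meaning. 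The remaining ingredients---the trace evaluation, the equivalence $g<1\Leftrightarrow\rho_e$ entangled, and the geometric-series estimate---are routine.
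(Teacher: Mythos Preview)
Your proposal is correct and follows essentially the same route as the paper: apply $W_{\mathcal{B}(G)}$, reduce to showing $\sum_{K\ne\emptyset}g^{\abs{K}}\to0$ (your $g$ equals the paper's $\gamma(d+1)$, and your $f^{\abs{E}}$ is their $[1+(d^2-1)\gamma]^{-\abs{E}}$), lower-bound each cut size by $\abs{S}(\delta-\abs{S}+1)\ge \epsilon\abs{V}\,\abs{S}$ via the minimum-degree hypothesis, and finish with $\binom{n}{k}\le n^k$ and a geometric series. Two cosmetic remarks: the factor $2$ is unnecessary since for connected $G$ each nonempty cut already arises from some $S$ with $\abs{S}\le\abs{V}/2$; and your explicit flagging of the uniformity in the degree assumption is exactly how the paper reads it (``there exists $c>\tfrac12$ with $\deg(v)\ge c\abs{V}$ for all $v$ and all large $\abs{V}$'').
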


More generally, to demonstrate the advantage of SSWs, we compare them with the
widely used fidelity-based GME witnesses for randomly generated graphs.
Figure~\ref{fig:randomgraph} presents a comparison of the certification power
of the SSWs and the fidelity-based entanglement witnesses.
For a fair comparison, we use only the analytical bounds derived from
$\mathcal{B}(G)$-SSWs, without employing the optimal SSWs obtained
through numerical optimization.

In passing, we would like to mention that due to the form of SSW in
Eq.~\eqref{eq:SSW}, the witness value can be estimated by only determining the
fidelity with respect to the maximally entangled state. This fidelity can be
estimated efficiently using randomized local
measurements \cite{FlammiaDirectFidelityEstimation2011, ZhuOptimalverificationfidelity2019, YuOptimalverificationgeneral2019, YuStatisticalMethodsQuantum2022}, for which required copies of the states do not increase with the local dimension.
Details regarding complexity are provided in Appendix~\ref{app:complexity}.

%\section{Conclusion}
\textit{Conclusion.---}%
We proposed the SSWs for scalable entanglement certification in quantum
networks. The scalability of SSWs are shown in two aspects. On the one hand, the
problem of searching for SSWs can be cast as an LP, thus this task can be solved
systematically and more efficiently. On the other hand, SSWs can be evaluated by
estimating merely the fidelity with respect to maximally entangled state for
each shared state, making the measurement overhead scalable for large networks.
The core innovation presented is a graph-theoretical characterization of
SSWs by leveraging the cut space of the quantum network's topology.
Beyond making network entanglement certification practical,
this graph-theoretical analysis of SSWs also reveals some fundamental properties
of network entanglement, including the partial order of network topologies, the
topology-independence of negativity for tree networks, and the aggregate of
entanglement in highly connected networks.

Several interesting directions emerge from this work.
First, while the current work is devoted to the study of GME,
the scope of applicability of our method goes well beyond this context.
We expect the approach to be straightforwardly extended to certify other
relevant entanglement quantities in quantum networks, such as entanglement
depth \cite{SoerensenEntanglementExtremeSpin2001} and entanglement detection
length \cite{ShiEntanglementDetectionLength2025}, and more general entanglement
structures \cite{HuberStructureMultidimensionalEntanglement2013,
ShahandehStructuralQuantificationEntanglement2014,
CobucciDetectingdimensionalitygenuine2024}.
Second, the applicability of our method is also not limited to quantum networks.
The approach can be easily adapted to certify general multipartite entanglement
through the use of local projections \cite{HorodeckiMixedStateEntanglement1998, Clarissedistillabilityproblemrevisited2006}.
This can serve as an effective tool for the study of
the superactivation of multipartite entanglement \cite{HuberPurificationgenuinemultipartite2011,
YamasakiActivationgenuinemultipartite2022,
PalazuelosGenuinemultipartiteentanglement2022, BaksovaMulticopyactivation2025,
WeinbrennerSuperactivationIncompressibilityGenuine2025}.
Third, our current method exclusively utilizes the sub-symmetry of the maximally
entangled state. It would be worthwhile to generalize the method by
incorporating other states that exhibit sub-symmetry, such as Werner states
\cite{WernerQuantumstatesEinstein1989}.

%\begin{acknowledgments}
    This work was supported by
    the National Natural Science Foundation of China
    (Grants No.~12574537, No.~12205170, and No.~12174224)
    and the Shandong Provincial Natural Science Foundation of China
    (Grant No. ZR2022QA084).
%\end{acknowledgments}

\bibliography{NetworkWitness.bbl}
\appendix

\section{GME negativity}\label{app:GME_negativity}
\textit{GME negativity.}---%
$\mathcal{N}(\rho)$ defined in Eq.~\eqref{eq:GME_negativity} is an entanglement
monotone, since it obeys the following properties. (i) It vanishes on all
biseparable states. (ii) The quantity does not increase under protocols that
consist of local operations of each party and classical communication between
them. 
(iii) It is convex.
In addition, in the bipartite case, $\mathcal{N}(\rho)$ becomes the negativity \cite{VidalComputablemeasureentanglement2002,
 PlenioLogarithmicNegativityFull2005}.
Note that the GME negativity defined here is slightly different from the GME
monotone defined in Ref. \cite{JungnitschTamingMultiparticleEntanglement2011}. 
The GME negativity defined here is a natural choice as the convex roof of bipartite negativity in Eq.~\eqref{eq:GME_negativity}.
\begin{proof}
Property (i) is direct, since any biseparable state $\rho$ admits the
decomposition $\rho=\sum_S p_S \rho_S$, where $\rho_S$ is separable with respect
to the partition $S|S^c$. Therefore, $\mathcal{N}_S(\rho_S)=0$ and
thus $\mathcal{N}(\rho)=0$.

Property (ii) follows from that $\mathcal{N}_S$ does not increase under local
operations of $S$ and $S^c$ and classical communication between them, which
are the superset of local operations of each party and classical communication
between them.

To show the convexity, let $\mathcal{N}(\rho_i)=\sum_S
p_S^{(i)} \mathcal{N}_S(\rho_S^{(i)})$ be the negativity of $\rho_i$ with $\rho_i= \sum_S p_S^{(i)} \rho_S^{(i)}$.
Then we have
\begin{equation}
    \begin{aligned}
        \sum_i p_i \mathcal{N}(\rho_i)
        = &\sum_S \tilde{p}_S \sum_i \tilde{p}_S^{(i)} \mathcal{N}_S\qty(\rho_S^{(i)})\\
        \ge &\sum_S \tilde{p}_S \mathcal{N}_S\qty(\sum_i \tilde{p}_S^{(i)} \rho_S^{(i)})\\
        \ge & \mathcal{N}\qty(\sum_i p_i\rho_i),
    \end{aligned}
\end{equation}
where $\tilde{p}_S=\sum_i p_i p_S^{(i)}$ and $\tilde{p}_S^{(i)}=p_i p_S^{(i)}/\tilde{p}_S$.
The first inequality comes from the convexity of $\mathcal{N}_S$, and the second inequality holds since $\sum_S\tilde{p}_S\sum_i\tilde{p}_S^{(i)}\rho_S^{(i)}$ is a decomposition of $\sum_i p_i \rho_i$.
\end{proof}

\textit{Proof of Eq.~\eqref{eq:GME_negativity_dual}.}---%
First, one can easily verify that the bipartite negativity
$\mathcal{N}_S(\rho_S)$ can be obtained by
\begin{equation}
	\begin{aligned}
        \min_{X_S}:  \quad & \Tr(X_S)\\
		\text{s. t.}: \quad
        &X_S \succeq 0,\ X_S \succeq -\rho_S^{\Gamma_S}.
	\end{aligned}
\end{equation}
Thus, the GME negativity $\mathcal{N}(\rho)$ can also be obtained by solving the
following optimization problem:
\begin{equation}
	\begin{aligned}
		\min_{p_S,\rho_S, X_S}: \quad & \sum_S p_S \Tr X_S\\
		\text{s. t.}: \quad & \rho=\sum_Sp_S\rho_S,\\
        \quad & p_S\ge 0,~\rho_S\succeq 0,~\Tr(\rho_S)=1,\\
        \quad & X_S\succeq 0,~X_S\succeq -\rho_S^{\Gamma_S},~\forall S.
	\end{aligned}
    \label{eqa:GMENegOpt}
\end{equation}
Let $\tilde{\rho}_S=p_S\rho_S$ and $\tilde{X}_S=p_SX_S$.
Eq.~\eqref{eqa:GMENegOpt} can be cast as the following SDP:
\begin{equation}
	\begin{aligned}
		\min_{\tilde{\rho}_S, \tilde{X}_S}: \quad & \sum_S \Tr \tilde{X}_S\\
		\text{s. t.}: \quad & \rho=\sum_S \tilde{\rho}_S,
        ~\tilde{\rho}_S\succeq 0,\\
        \quad &\tilde{X}_S\succeq 0,
        ~\tilde{X}_S\succeq-\tilde{\rho}_S^{\Gamma_S},~\forall S.
	\end{aligned}
\end{equation}
whose dual problem is exactly Eq.~\eqref{eq:GME_negativity_dual}, i.e.,
\begin{equation}\label{eqa:GMENegDual}
	\begin{aligned}
		\max_{W,Q_S}: \quad& -\Tr(W\rho)\\
		\text{s. t.}: \quad
		& 0 \preceq Q_S \preceq \mathbb{I},~
        Q_S^{\Gamma_S}\preceq W,~\forall S,\\
	\end{aligned}
\end{equation}
and moreover, the strong duality holds due to Slater's condition \cite{VandenbergheSemidefiniteProgramming1996}.

\section{Proof of Observation~\ref{obs:lp}}\label{app:lp}

\begin{lem}
	For an SSW $W$, if $W \succeq Q_S^{\Gamma_S}$ for some $Q_S \succeq 0$,
    then there exists $\tilde{Q}_S \succeq 0$ in the form of
	\begin{equation}\label{eq:q_gamma}
		\tilde{Q}_S^{\Gamma_S}=\sum_{T \subseteq E} q_T(S) \bigotimes_{e \in T} \Phi_e
		\otimes \bigotimes_{e \notin T} (\mathbb{I}_e-\Phi_e),
	\end{equation} 
	such that $W \succeq \tilde{Q}_S^{\Gamma_S}$ and $\tilde{Q}_S \succeq 0$.
	Moreover, the constraint $\tilde{Q}_S\preceq\mathbb{I}$ is also satisfied
    whenever $Q_S \preceq \mathbb{I}$ holds.
    \label{lem:twirling}
\end{lem}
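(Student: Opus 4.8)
The plan is to exploit a symmetry of the SSW: its invariance under local isotropic twirling on each edge. Concretely, for every edge $e=(u,v)\in E$ I would introduce the isotropic twirl $\mathcal{S}_e(X)=\int\mathrm{d}U\,(U_u\otimes\bar U_v)\,X\,(U_u\otimes\bar U_v)^\dagger$ acting on the two qudits sitting at the endpoints of $e$, and set $\mathcal{S}=\bigotimes_{e\in E}\mathcal{S}_e$. Since $(U\otimes\bar U)\ket{\phi^+}=\ket{\phi^+}$, each $\mathcal{S}_e$ fixes $\Phi_e$ and $\mathbb{I}_e$, hence also $\mathbb{I}_e-\Phi_e$; as $W$ in Eq.~\eqref{eq:SSW} is a linear combination of tensor products of these single-edge operators, $\mathcal{S}(W)=W$. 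Moreover $\mathcal{S}_e$ is the projection onto $\operatorname{span}\{\Phi_e,\mathbb{I}_e-\Phi_e\}$, so $\mathcal{S}$ sends an arbitrary operator into the span of $\{\bigotimes_{e\in T}\Phi_e\otimes\bigotimes_{e\notin T}(\mathbb{I}_e-\Phi_e)\}_{T\subseteq E}$, which is exactly the family appearing on the right-hand side of Eq.~\eqref{eq:q_gamma}.

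The core of the argument is then to analyse the twisted map $\mathcal{T}:=\Gamma_S\circ\mathcal{S}\circ\Gamma_S$. Because the bipartition $S|S^c$ induces a bipartition of each edge, $\Gamma_S$ factorizes over $E$ as a tensor product of, on each edge, either the identity, the transpose of one endpoint, or the transpose of both endpoints, according to whether the edge has zero, one, or two endpoints in $S$. Using the elementary identity $T_u\circ\mathrm{Ad}_{A_u}\circ T_u=\mathrm{Ad}_{\bar A_u}$ (with $\mathrm{Ad}_A(X)=AXA^\dagger$) for a single-qudit transpose $T_u$, one checks that $\Gamma_S$-conjugating $\mathcal{S}_e$ again produces an average of local unitary conjugations on edge $e$; for a cut edge this is precisely the Werner twirl $\int\mathrm{d}W\,(W_u\otimes W_v)X(W_u\otimes W_v)^\dagger$. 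Hence $\mathcal{T}$ is a tensor product of averages of unitary conjugations — in particular a unital, completely positive, trace-preserving map — even though $\Gamma_S$ by itself does not preserve positivity.

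With $\mathcal{S}$ and $\mathcal{T}$ in hand, I would set $\tilde Q_S:=\mathcal{T}(Q_S)$ and verify the four claims. Positivity $\tilde Q_S\succeq0$ and, when $Q_S\preceq\mathbb{I}$, the bound $\tilde Q_S\preceq\mathbb{I}$ follow from $\mathcal{T}$ being positive and unital, applied to $Q_S\succeq0$ and $\mathbb{I}-Q_S\succeq0$. Using $\Gamma_S^2=\mathrm{id}$ gives $\tilde Q_S^{\Gamma_S}=\Gamma_S\circ\mathcal{T}(Q_S)=\mathcal{S}(Q_S^{\Gamma_S})$, which lies in the span identified in the first paragraph, so it has the form of Eq.~\eqref{eq:q_gamma} with real coefficients $q_T(S)$ (real since $\tilde Q_S^{\Gamma_S}$ is Hermitian). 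Finally, applying the positive map $\mathcal{S}$ to the operator inequality $W-Q_S^{\Gamma_S}\succeq0$ and invoking $\mathcal{S}(W)=W$ yields $W-\tilde Q_S^{\Gamma_S}=\mathcal{S}\!\left(W-Q_S^{\Gamma_S}\right)\succeq0$, i.e.\ $W\succeq\tilde Q_S^{\Gamma_S}$. To pin down $q_T(S)\le w_T$ as stated in Observation~\ref{obs:lp}, one then compares coefficients edgewise after twirling, but for the lemma itself the above suffices.

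The step I expect to be the main obstacle is establishing that $\mathcal{T}=\Gamma_S\circ\mathcal{S}\circ\Gamma_S$ is positive: partial transposition alone is not positive, and the whole scheme works only because conjugating a partial transpose by a local unitary equals the untransposed partial transpose conjugated by another local unitary — the content of the identity $T_u\circ\mathrm{Ad}_{A_u}\circ T_u=\mathrm{Ad}_{\bar A_u}$. Once that bookkeeping is carried out carefully on cut versus non-cut edges, the remaining verifications are routine.
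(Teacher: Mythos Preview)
Your proposal is correct and is essentially the same argument as the paper's proof. The paper defines the edge-wise $U\otimes U^*$-twirl $\mathcal{T}_e$ and the $U\otimes U$-twirl $\mathcal{T}'_e$ via their explicit trace formulas, states the key intertwining relation $\mathcal{T}(X^\Gamma)=[\mathcal{T}'(X)]^\Gamma$, and then sets $\tilde Q_S=\bigl(\bigotimes_{e\notin T(S)}\mathcal{T}_e\otimes\bigotimes_{e\in T(S)}\mathcal{T}'_e\bigr)(Q_S)$; your map $\mathcal{T}=\Gamma_S\circ\mathcal{S}\circ\Gamma_S$ is exactly this operator, and your derivation of its positivity via $T_u\circ\mathrm{Ad}_{A_u}\circ T_u=\mathrm{Ad}_{\bar A_u}$ on cut edges is precisely a proof of the intertwining relation the paper invokes.
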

\begin{proof}
	Let $\mathcal{T}$ be the linear operator on $X \in \mathcal{L}(\mathbb{C}^d \otimes \mathbb{C}^d)$, which is defined as
	\begin{equation}\label{eq:twirling}
		\begin{aligned}
			\mathcal{T}(X)=&\frac{\Tr(\Phi X)}{\Tr\Phi} \Phi + \frac{\Tr((\mathbb{I}-\Phi)X)}{\Tr(\mathbb{I}-\Phi)} (\mathbb{I}-\Phi)\\
			=&\Tr(\Phi X) \Phi + \Tr(\frac{\mathbb{I}-\Phi}{d^2-1}X) (\mathbb{I}-\Phi).
		\end{aligned}
	\end{equation}
	In fact, $\mathcal{T}$ is equivalent to the $U \otimes U^*$-twirling operation defined in \cite{TerhalSchmidtnumberdensity2000}.
	The twirling operation $\mathcal{T}$ has the following properties:
	\begin{itemize}
		\item[(i)] It takes any operator into the subspace $\mathrm{span}\qty{\Phi,
		\mathbb{I}-\Phi}$.
		\item[(ii)] It keeps invariant operators in $\mathrm{span}\qty{\Phi,
		\mathbb{I}-\Phi}$, i.e.,
		$\mathcal{T}(a\Phi+b(\mathbb{I}-\Phi))=a\Phi+b(\mathbb{I}-\Phi)$.
		\item[(iii)] If $a\mathbb{I}\preceq X \preceq b\mathbb{I}$,
		then $a\mathbb{I}\preceq \mathcal{T}(X)
		\preceq b\mathbb{I}$.
	\end{itemize}
	Similarly, let $\mathcal{T}'$ be the linear operator defined as
	\begin{equation}
		\mathcal{T}'(X) = \frac{\Tr(P^+ X)}{\Tr(P^+)} P_+ + \frac{\Tr(P^- X)}{\Tr(P^-)} P^-,
	\end{equation}
	where $P^\pm=\frac{\mathbb{I}\pm V}{2}$ are orthogonal projectors with
	$V=\sum_{i,j=0}^{d-1} \ketbra{i}{j} \otimes \ketbra{j}{i}$ being the swap
	operator. $\mathcal{T}'$ is equivalent to the $U \otimes U$-twirling operation
	in \cite{WernerQuantumstatesEinstein1989} and also have the similar properties
	as $\mathcal{T}$, with $\mathrm{span}\qty{\Phi, \mathbb{I}-\Phi}$ replaced by
	$\mathrm{span}\qty{P^+, P^-}$.
    Moreover, we have
    \begin{equation}\label{eq:twirling_dual}
        \mathcal{T}(X^\Gamma) = \qty[\mathcal{T}'(X)]^\Gamma,
    \end{equation}
    where $\Gamma$ denotes the partial transpose (with respect to one party).
	
	Let $\mathcal{T}_e$ and $\mathcal{T}'_e$ denote the operations $\mathcal{T}$
	and $\mathcal{T}'$ on $e \in E$, respectively. For $W \succeq Q_S^{\Gamma_S}$,
	we have $(\bigotimes_{e \in E} \mathcal{T}_e) (W - Q_S^{\Gamma_S}) \succeq 0$
	from property (iii)
    (property (iii) also holds for the multi-party case),
    and property (ii) then leads to that
	\begin{equation}
		W = \qty(\bigotimes_{e \in E} \mathcal{T}_e) (W) \succeq \qty(\bigotimes_{e \in E} \mathcal{T}_e) (Q_S^{\Gamma_S}) \equiv \tilde{Q}_S^{\Gamma_S}.
	\end{equation}
	From property (i), $\tilde{Q}_S^{\Gamma_S}$ is in the form of Eq.~\eqref{eq:q_gamma}. Then by Eq.~\eqref{eq:twirling_dual}, we have
	\begin{equation}\label{eq:q_twirling}
		\tilde{Q}_S = \qty(\bigotimes_{e \notin T(S)} \mathcal{T}_e \otimes \bigotimes_{e \in T(S)} \mathcal{T}'_e) (Q_S),
	\end{equation}
	then $\tilde{Q}_S \succeq 0$ ($\tilde{Q}_S \preceq \mathbb{I}$) if $Q_S \succeq 0$ ($Q_S \preceq \mathbb{I}$) from property (iii). This completes the proof.
\end{proof}

\textit{Proof of Observation~\ref{obs:lp}.}---% 
An observable $W$ is a fully decomposable witness if and only if there exists $Q_S \succeq 0$ such that $W \succeq Q_S^{\Gamma_S}$ for all subsystems $S$ \cite{JungnitschTamingMultiparticleEntanglement2011}.
For an SSW $W$ in the form of Eq.~\eqref{eq:SSW}, Lemma~\ref{lem:twirling}
implies that we only need to consider $Q_S^{\Gamma_S}$ in the form of
Eq.~\eqref{eq:q_gamma}, i.e.,
\begin{equation}
    Q_S^{\Gamma_S}=\sum_{T \subseteq E} q_T(S) \bigotimes_{e \in T} \Phi_e
    \otimes \bigotimes_{e \notin T} (\mathbb{I}_e-\Phi_e),
\end{equation} 
then the condition $W\succeq Q_S^{\Gamma_S}$ is equivalent to $w_T \ge q_T(S)$ for all $T\subseteq E$, as $\bigotimes_{e \in T} \Phi_e \otimes \bigotimes_{e \notin T} (\mathbb{I}_e-\Phi_e)$ are orthogonal projetors for different $T$.

Note that for a subsystem $S$,
\begin{equation}
    \qty{T\mid T\subseteq E} = \qty{T\cup\tilde{T} \mid T\subseteq T(S), \tilde{T} \subseteq T(S)^c},
\end{equation}
where $T(S)^c=E\setminus T(S)$, then $Q_S^{\Gamma_S}$ can be reformulated as
\begin{equation}
    Q_S^{\Gamma_S} = \sum_{\substack{T\subseteq T(S),\\
    \tilde{T} \subseteq T(S)^c}} q_{T\cup\tilde{T}}(S) \bigotimes_{e\in T\cup\tilde{T}}\Phi_e \otimes \bigotimes_{e\in E\setminus (T\cup\tilde{T})}(\mathbb{I}_e-\Phi_e).
\end{equation}
Since $\Phi^\Gamma=\frac{P^+-P^-}{d}$ and $(\mathbb{I}-\Phi)^\Gamma=\frac{(d-1)P^++(d+1)P^-}{d}$, we have
\begin{equation}
    Q_S=
    \sum_{\substack{T\subseteq T(S),\\
    \tilde{T} \subseteq T(S)^c}}
    \tfrac{q_{T\cup\tilde{T}}(S)}{d^\abs{T(S)}}  A_T(S) \otimes \bigotimes_{e\in\tilde{T}} \Phi_e \otimes \bigotimes_{e\in T(S)^c\setminus\tilde{T}} (\mathbb{I}_e-\Phi_e),
\end{equation}
where
\begin{equation}
    \begin{aligned}
        A_T(S)&=\bigotimes_{e\in T} (P_e^+-P_e^-)\\
        \otimes &\ \bigotimes_{e \in T(S)\setminus T} ((d-1)P_e^+ + (d+1)P_e^-).
    \end{aligned}
\end{equation}
The set of operators $\qty{A_T(S)\mid T\subseteq T(S)}$ and the set of operators $\qty{B_T(S)\mid T\subseteq T(S)}$ where $B_T(S)=\bigotimes_{e\in T} P_e^+ \otimes \bigotimes_{e\in T(S)\setminus T} P_e^-$ span the same vector space.
Then $A_T(S)$ can be expanded by $\qty{B_T(S)\mid T\subseteq T(S)} $,
and it is easy to verify that the coefficients are given by
\begin{equation}
    \mathbold{x}_T(S) = \bigotimes_{e\in T} (1, -1) \otimes \bigotimes_{e\in T(S)\setminus T} (d-1, d+1).
\end{equation}
Therefore, $Q_S$ can be reformulated as
\begin{equation}\label{eq:qsbt}
    Q_S=\sum_{\substack{T\subseteq T(S),\\
    \tilde{T} \subseteq T(S)^c}} \tfrac{r_{T\cup\tilde{T}}(S)}{d^\abs{T(S)}} B_T(S) \otimes \bigotimes_{e\in\tilde{T}} \Phi_e \otimes \bigotimes_{T(S)^c\setminus\tilde{T}} (\mathbb{I}_e-\Phi_e),
\end{equation}
where $r_{T\cup \tilde{T}}(S)$ are given by $\sum_{T\subseteq T(S)} q_{T\cup\tilde{T}}(S) \mathbold{x}_T(S)$ and the operators within the summation are orthogonal projectors for different $T\cup \tilde{T}$.
Then the condition $Q_S \succeq 0$ is equivalent to $r_{T\cup\tilde{T}}(S) \ge 0$ for all $T\cup\tilde{T}$, i.e., 
\begin{equation}
    \sum_{T\subseteq T(S)} q_{T\cup\tilde{T}}(S) \mathbold{x}_T(S) \ge 0, \quad \forall \tilde{T}\subseteq E\setminus T(S),
\end{equation}
which completes the proof.

In addition, the condition $0\preceq Q_S \preceq \mathbb{I}$ is equivalent to $0\le \frac{r_{T\cup\tilde{T}}(S)}{d^\abs{T(S)}}\le 1$, i.e., 
\begin{equation}
    0 \le \sum_{T\subseteq T(S)} q_{T\cup\tilde{T}}(S) \mathbold{x}_T(S) \le d^\abs{T(S)}, \quad \forall \tilde{T}\subseteq E\setminus T(S).
\end{equation}

\bigskip
\textit{Proof of Eq.~\eqref{eq:negativity}.}---%
Let $\mathcal{S}$ denote the set of feasible witnesses in Eq.~\eqref{eq:GME_negativity_dual}, and $\mathcal{W}$ denote the set of operators in the form of Eq.~\eqref{eq:SSW}.
If we replace the condition $W\in \cal{S}$ in Eq.~\eqref{eq:GME_negativity_dual} by $W \in \mathcal{S} \cap \mathcal{W}$, we can get a lower bound of GME negativity.
Then the SDP
\begin{equation}
	\begin{aligned}
		\max_{W,Q_S}: \quad& -\Tr(W\rho)\\
		\text{s. t.}: \quad
		& 0 \preceq Q_S \preceq \mathbb{I},~
        Q_S^{\Gamma_S}\preceq W,~\forall S
	\end{aligned}
\end{equation}
in Eq.~\eqref{eq:GME_negativity_dual} becomes
\begin{equation}\label{eq:sdp_restrcited}
	\begin{aligned}
		\max_{W,Q_S}: \quad& -\Tr(W\rho)\\
		\text{s. t.}: \quad
		& 0 \preceq Q_S \preceq \mathbb{I},~
        Q_S^{\Gamma_S}\preceq W,~\forall S,\\
        & W = \sum_{T\subseteq E}w_T\textstyle\bigotimes_{e\in T}\Phi_e\otimes \bigotimes_{e\notin T}(\mathbb{I}_e-\Phi_e).
	\end{aligned}
\end{equation}
On the one hand, $\Tr(W\rho)=\sum_{T\subseteq E} w_T p_T$, where $p_T=\Tr(\rho P_T)$ with $P_T=\bigotimes_{e\in T}\Phi_e\otimes\bigotimes_{e\notin T} (\mathbb{I}_e-\Phi_e)$. In the case of $\rho=\bigotimes_{e\in E}\rho_e$, $p_T$ can be further simplified to $p_T=\prod_{e\in T}\Tr(\rho_e\Phi_e)
\prod_{e\notin T}(1-\Tr(\rho_e\Phi_e))$.
On the other hand, from the proof of Observation~\ref{obs:lp}, the condition $0\preceq Q_S\preceq \mathbb{I}$ and $Q_S^{\Gamma_S}\preceq W$ are equivalent to that there exist $q_T(S)\le w_T$ such that
\begin{equation}
    0 \le \sum_{T\subseteq T(S)} q_{T\cup\tilde{T}}(S) \mathbold{x}_T(S) \le d^\abs{T(S)}, \quad \forall \tilde{T}\subseteq E\setminus T(S).
\end{equation}
Therefore, the SDP in Eq.~\eqref{eq:sdp_restrcited} reduces to the LP
\begin{equation}
	\begin{aligned}
		\max_{w_T,q_T}: \quad& -\sum_{T \subseteq E}w_Tp_T\\
		\text{s. t.}: \quad & 
		q_T(S) \le w_T,\\
		&0\le\sum_{T\subseteq T(S)} q_{T\cup\tilde{T}}(S) \vect{x}_T(S)\le d^{\abs{T(S)}}
	\end{aligned}
\end{equation}
in Eq.~\eqref{eq:negativity}, which provides a lower bound of GME negativity for arbitrary $\rho$.

Moreover, for $\rho\in\mathrm{span}\qty{P_T\mid T\subseteq E}$ (e.g., the state distribution process is subject to white noise), the bound is tight.
In fact, for $\rho\in\mathrm{span}\qty{P_T\mid T\subseteq E}$, if $W\in \cal{S}$, we can construct an SSW $\tilde{W}\in \mathcal{S}\cap \mathcal{W}$ such that $\Tr(W\rho)=\Tr(\tilde{W}\rho)$, thus the bound is tight.
Let $\tilde{W}=(\bigotimes_{e\in E} \mathcal{T}_e)(W)$. It is easy to see that $\tilde{W}\in\mathcal{W}$ from property (i) of $\mathcal{T}$ and that $\Tr(W\rho)=\Tr(\tilde{W}\rho)$ as the operation $\bigotimes_{e \in E}\mathcal{T}_e$ keeps invariant the overlap with $P_T$, that is, $\Tr(P_T [\qty(\bigotimes_{e \in E}\mathcal{T}_e)(X)])=\Tr(P_T X)$ for any $T \subseteq E$ and any $X$.
To see that $\tilde{W}\in\mathcal{S}$, recall that $W\in\mathcal{S}$, hence there exists $0\preceq Q_S\preceq \mathbb{I}$ such that $W\succeq Q_S^{\Gamma_S}$. From the property (iii) of $\mathcal{T}$, we have $\qty(\bigotimes_{e \in E}\mathcal{T})(W-Q_S^{\Gamma_S}) \succeq 0$, and thus
\begin{equation}
    \tilde{W} = \qty(\bigotimes_{e \in E}\mathcal{T}_e)(W) \succeq \qty(\bigotimes_{e \in E}\mathcal{T}_e) \qty(Q_S^{\Gamma_S})=\tilde{Q}_S^{\Gamma_S},
\end{equation}
where $0 \preceq \tilde{Q}_S \preceq \mathbb{I}$ from Eq.~\eqref{eq:q_twirling} and property (iii) of $\mathcal{T}$ and $\mathcal{T}'$.
Therefore, $\tilde{W}\in \mathcal{S}$.

\section{Network topology and the cut space}\label{app:property}
Consider a graph $G=(V, E)$, where $v \in V$ denote the vertices and unordered pairs $e=(u, v) \in E$ with $u \ne v$ for $u, v \in V$ denote the edges.

The ends of an edge are said to be \textit{incident} with the edge, and vice versa, that is, a vertex $v\in V$ is incident with an edge $(u, w)\in E$ if and only $u=v$ or $w=v$. 
Two vertices which are incident with a common edge are \textit{adjacent}, i.e., $u, v\in V$ are adjacent if and only if $(u, v) \in E$.
The \textit{degree} $\deg(v)$ of a vertex $v\in V$ is the number of edges incident with $v$, or the number of vertices adjacent to $v$.

A \textit{walk} in $G=(V, E)$ is a finite non-null sequence $W=v_0 e_1 v_1 e_2 v_2\dots e_k v_k$, whose terms are alternatively vertices and edges, such that $e=(v_{i-1}, v_i)$ for $e \in E$ and $1 \le i \le k$.
A walk $W=v_0 e_1 v_1 e_2 v_2\dots e_k v_k$ is closed if $k \ne 0$ and $v_0 = v_k$.
A walk $W=v_0 e_1 v_1 e_2 v_2\dots e_k v_k$ is a $v_0-v_k$ \textit{path} if $v_i \ne v_j$ for all $i \ne j$.
If for any $u, v \in V$, there exists a $u-v$ path in $G$, we call $G$ \textit{connected}.
Here, we only consider connected graphs, since networks characterized by unconnected graphs can be partitioned into two separate parts, thus GME does not exist even if the shared states are all perfect $\Phi$.

The subsets of $E$ in $G$ form a vector space $\mathcal{E}(G)$ called \textit{edge space} over $\mathbb{F}_2=\qty{0, 1}$, with the addition being the symmetric difference
\begin{equation}
	T \oplus T' = \qty{e \in E \mid e \in T, e \notin T' \text{ or } e \notin T, e \in T'}
\end{equation}
and the scalar multiplication
\begin{equation}
	0 \cdot T = \emptyset,\quad 1 \cdot T = T.
\end{equation}
A set $K$ of edges is a \textit{cut} in $G=(V, E)$ if there exists a bipartition $S|S^c$ of $V$ such that $K=T(S)$ (recall that $T(S)=\qty{(u,v) \in E \mid u\in S, v\notin S \text{ or } u\notin S, v\in S}$ in Eq.~\eqref{eq:defTS}), which exactly captures the set of partially transposed edges when we perform partial transpose with respect to the bipartition $S|S^c$.
For example, the set $\{e_1, e_4\}$ in Fig.~\ref{fig:graph} is the cut for the bipartition $u_1u_2u_3|u_4u_5$.
Together with $\emptyset$, all cuts $K$ in $G$ form the \textit{cut space} $\mathcal{B}(G)$ as a subspace of $\mathcal{E}(G)$. 

\begin{figure}
    \centering
    \includegraphics[width=0.25\textwidth]{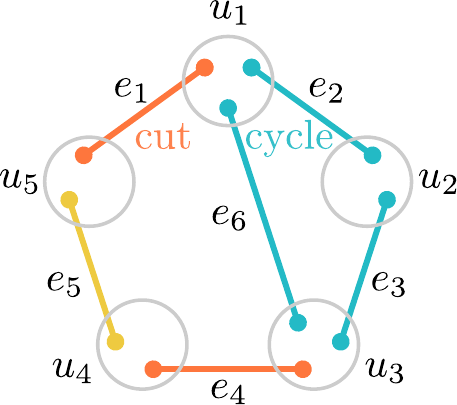}
    \caption{An example of a graph, where the blue walk $u_1e_2u_2e_3u_3e_6u_1$ characterizes a cycle, and the set $\{e_1, e_4\}$ of the orange edges is the cut for the bipartition $u_1u_2u_3|u_4u_5$.}
    \label{fig:graph}
\end{figure}

A subgraph $C$ of $G$ is a \textit{cycle} if it consists of all vertices and edges of a closed walk $W=v_0 e_1 v_1 e_2 v_2\dots e_k v_k$ where $v_0=v_k$ and $v_i \ne v_j$ for all $i\ne j$ and $0 \le i,j \le k-1$, that is, $C=(\qty{v_0, v_1, \dots, v_{k-1}}, \qty{e_1, e_2, \dots, e_k})$.
For example, there are three cycles in Fig.~\ref{fig:graph}, which are characterized by $u_1e_2u_2e_3u_3e_6u_1$, $u_1e_2u_2e_3u_3e_4u_4e_5u_5e_1u_1$, and $u_1e_6u_3e_4u_4e_5u_5e_1u_1$.
The \textit{cycle space} $\mathcal{C}(G)$ is the subspace of $\mathcal{E}(G)$ spanned all the cycles of $G$, or more exactly, the sets of edges of the cycles.
The dimension of $\mathcal{C}(G)$ is called the \textit{cyclomatic number}.

It is known that for a graph $G$, $\mathcal{B}(G)$ is the orthogonal complement of $\mathcal{C}(G)$, that is, $\mathcal{B}(G)=\mathcal{C}(G)^\perp$ \cite{DiestelGraphtheory2025}.
Based on this, we can find $\mathcal{B}(G)$ easily for some graphs.
For instance, a tree graph is a connected graph without any cycle, see examples in Fig.~\ref{fig:networks}(a), (b) and (c).
Therefore, we have $\mathcal{C}(G)=\emptyset$ and thus $\mathcal{B}(G)=\mathcal{E}(G)$, the collection of all subsets of $E$.
Another example is the ring network consisting of a single cycle, see examples in Fig.~\ref{fig:networks}(d) and (e).
Hence $\mathcal{C}(G)=\qty{\emptyset, E}$, then it is direct to verify that $\mathcal{B}(G)$ is composed of all subsets of $E$ of even elements.

Without loss of generality, we can label the $N$ edges in $E$ by $1$ to $N$.
Denoting by $\mathcal{G}_N$ the set of all graphs with edge set $E=\qty{1, 2, \dots, N}$, we define a partial order on $\mathcal{G}_N$ via the inclusion of the cut space $\mathcal{B}(G)$ for $G \in \mathcal{G}_N$.
That is, we say that $G_1 \prec G_2$ whenever the cut space of
$G_1$ is contained in that of $G_2$, i.e., $\mathcal{B}(G_1) \subseteq
\mathcal{B}(G_2)$. This, in turn, gives rise to an equivalence relation:
two graphs $G_1,G_2\in\mathcal{G}_N$ are considered equivalent, written $G_1
\sim G_2$ whenever $\mathcal{B}(G_1) = \mathcal{B}(G_2)$.
For example, denote the graphs in Fig.~\ref{fig:differentTopology}(a), (d), and (e) by $G_1$, $G_2$, and $G_3$, respectively.
All of them have three edges $\qty{1, 2, 3}$, that is, $G_1, G_2, G_3 \in \mathcal{G}_3$. Then we have
\begin{equation}
    \begin{aligned}
        & \qty{\emptyset, \qty{1, 2}, \qty{2, 3}, \qty{1, 3}} = \mathcal{B} (G_3) \subsetneq \mathcal{B}(G_1)=\mathcal{B}(G_2)\\
        &=\qty{\emptyset, \qty{1}, \qty{2}, \qty{3}, \qty{1, 2}, \qty{1, 3}, \qty{2, 3}, \qty{1, 2, 3}}.
    \end{aligned}
\end{equation}
Therefore, we have $G_3 \prec G_1 \sim G_2$.

\textit{Proof of Observation~\ref{obs:order}.}---%
Let $W$ be an SSW for $G_2$, then it satisfies Eq.~\eqref{eq:linearConstraintsK} for all $K \in \mathcal{B}(G_2)$.
Since $G_1 \prec G_2$, $W$ also satisfies Eq.~\eqref{eq:linearConstraintsK} for all $K \in \mathcal{B}(G_1) \subseteq \mathcal{B}(G_2)$, thus is an SSW for $G_1$, which proves (a).
Then (b) follows since $G_1 \sim G_2$ if and only if $G_1 \prec G_2$ and $G_2 \prec G_1$.

\textit{Proof of Observation~\ref{obs:tree}.}---%
As proved above, the cut space $\mathcal{B}(G_\mathrm{tree})$ of a tree graph $G_\mathrm{tree} \in \mathcal{G}_N$ consists of all subset of $E=\qty{1, 2, \dots, N}$.
Therefore, for any $G \in \mathcal{G}_N$, $G \prec G_\mathrm{tree}$.
Then Observation~\ref{obs:tree} follows from Observation~\ref{obs:order}.

\section{$E$-SSWs and $\mathcal{B}(G)$-SSWs}\label{app:SSW_family}
\textit{$E$-SSWs.}---%
The $E$-SSW $W_E$ is defined as
\begin{equation}\label{eq:E-SSW2}
	\begin{aligned}
		&W_E = -d\bigotimes_{e \in E} \Phi_e + \frac{d+1}{2} \bigotimes_{e \in E} \frac{\mathbb{I}_e+d\Phi_e}{d+1}\\
		&= -\frac{d-1}{2} \bigotimes_{e \in E} \Phi_e + \frac{d+1}{2} \sum_{\emptyset\ne T \subseteq E} \bigotimes_{e \in T} \frac{\mathbb{I}_e-\Phi_e}{d+1} \otimes \bigotimes_{e \notin T} \Phi_e.
	\end{aligned}
\end{equation}
In the following, we prove that $W_E$ is an SSW for tree networks.
Observation~\ref{obs:tree} implies that $W_E$ is also an SSW for any other network.

Recall from Appendix~\ref{app:property} that, for tree networks, the cut space contains all subsets $T\subseteq E$.
For any nonempty $T \subseteq E$, we have
\begin{equation}
	W_E \succeq Q_T^{\Gamma_T} \otimes \bigotimes_{e \notin T} \Phi_e,
\end{equation}
where 
\begin{equation}
	Q_T^{\Gamma_T} = \frac{1}{2\abs{T}} \sum_{e \in T} \qty[(\mathbb{I}_e-d\Phi_e) \otimes \bigotimes_{e' \in T \setminus \qty{e}} \frac{\mathbb{I}_{e'}+d\Phi_{e'}}{d+1}]
\end{equation}
and $\Gamma_T$ denotes the partial transpose of all edges $e \in T$ without ambiguity.
Therefore,
\begin{equation}
Q_T=\frac{1}{\abs{T}}\sum_{e \in T} P_e^- \otimes \bigotimes_{e' \in T
\setminus \qty{e}} \frac{2P_{e'}^+}{d+1},    
\end{equation}
which satisfies $0 \preceq Q_T \preceq \bigotimes_{e\in T}\mathbb{I}_e$.
It follows that $0\preceq Q_T\otimes\bigotimes_{e\notin T} \Phi_e \preceq \mathbb{I}$.

To see $W_E \succeq Q_T^{\Gamma_T} \otimes \bigotimes_{e \notin T} \Phi_e$, we expand $Q_T^{\Gamma_T}$ with positive semidefinite operators $\bigotimes_{e\in T'} \frac{\mathbb{I}_e-\Phi_e}{d+1}\otimes\bigotimes_{e\in T\setminus T'} \Phi_e$ as Eq.~\eqref{eq:E-SSW2}, which are orthogonal with each other for different $T'$. Then
\begin{widetext}
\begin{equation}
    \begin{aligned}
        Q_T^{\Gamma_T} = &\frac{1}{2\abs{T}} \sum_{e\in T} \Bigg[\qty((d+1)\frac{\mathbb{I}_e-\Phi_e}{d+1}-(d-1)\Phi_e) \otimes \bigotimes _{e'\in T\setminus\qty{e}} \qty(\frac{\mathbb{I}_{e'}-\Phi_{e'}}{d+1}+\Phi_{e'})\Bigg]\\
        = &-\frac{d-1}{2}\bigotimes_{e\in T} \Phi_e + \sum_{\substack{T' \subseteq T\\ T' \ne \emptyset}} a(T, T') \qty(\bigotimes_{e \in T'} \frac{\mathbb{I}_e-\Phi_e}{d+1} \otimes \bigotimes_{e \in T \setminus T'} \Phi_e),
    \end{aligned}
\end{equation}
\end{widetext}
where $a(T, T')=\frac{1}{2\abs{T}}\sum_{e\in T} c(e, T')$ and
\begin{equation}
    c(e, T')=
    \begin{cases}
        d+1,  &e \in T',\\
        -(d-1), &e \in T\setminus T'.
    \end{cases}
\end{equation}
Therefore,
\begin{equation}
    a(T, T') = \frac{\abs{T'}(d+1)-(\abs{T}-\abs{T'})(d-1)}{2\abs{T}}  \le
    \frac{d+1}{2},
\end{equation}
and the equality saturates when $T'=T$.
Then we have $W_E \succeq Q_T^{\Gamma_T}\otimes\bigotimes_{e\notin T}\Phi_e$.
This shows that $W_E$ is a legitimate SSW. Moreover, it satisfies the constraints in Eq.~\eqref{eq:GME_negativity_dual}.

\textit{$\mathcal{B}(G)$-SSWs.}---%
The $\mathcal{B}(G)$-SSW $W_{\mathcal{B}(G)}$ is defined as
\begin{equation}
	W_{\mathcal{B}(G)} = -\bigotimes_{e\in E}\Phi_e
    + \sum_{\emptyset \ne K \in \mathcal{B}(G)}
    \bigotimes_{e\in K}\frac{\mathbb{I}_e-\Phi_e}{d-1}
    \otimes \bigotimes_{e\notin K}\Phi_e.
\end{equation}

Note that $W_{\mathcal{B}(G)}\succeq Q_K^{\Gamma_K}$, where
\begin{equation}
    Q_K^{\Gamma_K}=\qty(-\bigotimes_{e\in K} \Phi_e + \bigotimes_{e\in K} \frac{\mathbb{I}_e-\Phi_e}{d-1}) \otimes \bigotimes_{e\notin K} \Phi_e,
\end{equation}
and
\begin{equation}
    \begin{aligned}
        &Q_K\\
        =&\qty[-e\bigotimes_{e\in K} \qty(\tfrac{P_e^+}{d}-\tfrac{P_e^-}{d}) + \bigotimes_{e\in K}\qty(\tfrac{P_e^+}{d} + \tfrac{(d+1)P_e^-}{d(d-1)})]\otimes\bigotimes_{e\notin K} \Phi_e\\
        \succeq &0
    \end{aligned}
\end{equation}
for all nonempty cuts $K\in\mathcal{B}(G)$. 
Since $\mathcal{B}(G)$ contains the sets $T(S)$ for all subsystems $S$, this validates that $W_{\mathcal{B}(G)}$ is an SSW.

\section{SSWs for permutation-invariant states in cactus networks}\label{app:cactus}

The simplification from SDP to LP substantially reduces the complexity of the optimization problem and removes the dimension-dependent restriction, however, it still has variables and constraints increasing exponentially with the scale of the networks.
For symmetric states in networks with specific topology, the variables and constraints can be reduced.
More specifically, here we consider the permutation-invariant states in cactus networks.
By permutation-invariant, we mean that a state remains invariant under the exchange of any two edges of the state.

A cactus graph is a connected graph $G=(V, E)$ such that every $e \in E$ belongs to at most one cycle in $G$ \cite{WestIntroductionGraphTheory2001}.
See Fig.~\ref{fig:cactus} for examples of cactus graph.
For a cactus graph $G$, the cyclomatic number $n_c$ (dimension of cycle space $\mathcal{C}(G)$) is exactly the number of cycles in the network.
Note that tree networks and ring networks are cactus graphs with cyclomatic number $0$ and $1$, respectively.
\begin{figure}
    \centering\includegraphics[width=\linewidth]{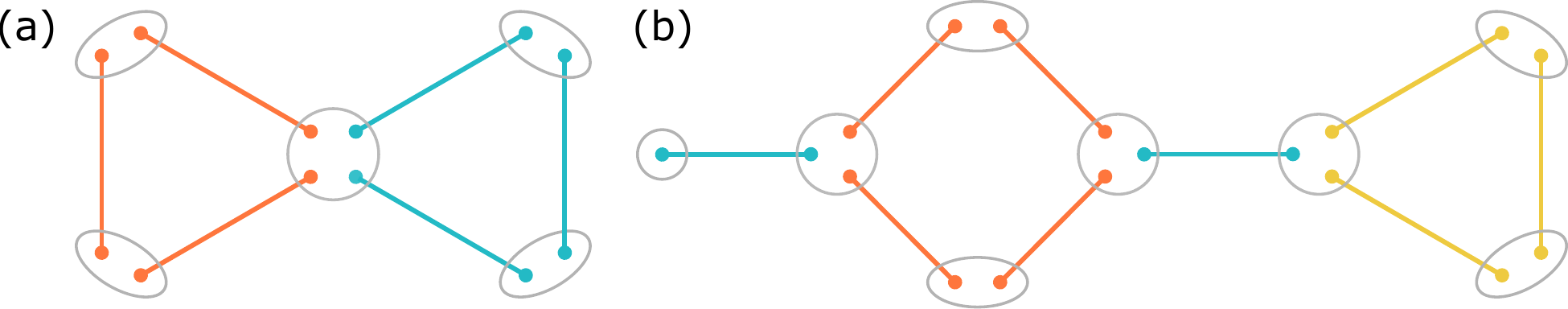}
    \caption{Examples of cactus networks. The edges of the same color are equivalent for permutation invariant states.
    }
    \label{fig:cactus}
\end{figure}

\begin{observation}
	For a permutation-invariant state $\rho$ in a cactus network with cyclomatic number $n_c$, an SSW can be determined by at most $\qty(1+\frac{\abs{E}}{n_c+1})^{n_c+1}$ coefficients.
	The corresponding LP requires at most
	$\frac{2}{12^{n_c+1}} \qty(3+\frac{\abs{E}}{n_c+1})^{3n_c+3}$
	variables, and
	$\frac{6}{12^{n_c+1}} \qty(3+\frac{\abs{E}}{n_c+1})^{3n_c+3}$
	constraints.
\end{observation}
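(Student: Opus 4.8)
\emph{Strategy.} The plan is to convert the permutation invariance of $\rho$ together with the rigid block structure of a cactus into an explicit symmetry group acting on the LP of Eq.~\eqref{eq:negativity}, and then to count orbits. A cactus graph $G=(V,E)$ carries a canonical edge partition $E=E_0\sqcup E_1\sqcup\cdots\sqcup E_{n_c}$, where $E_1,\dots,E_{n_c}$ are the edge sets of the $n_c$ cycles (pairwise disjoint, since each edge lies in at most one cycle) and $E_0$ is the set of bridges, so that there are at most $n_c+1$ parts. Writing $n_i=\abs{E_i}$ with $\sum_i n_i=\abs{E}$, the cycle space is $\mathcal{C}(G)=\mathrm{span}_{\mathbb{F}_2}\{E_1,\dots,E_{n_c}\}$ (these sets are $\mathbb{F}_2$-independent, recovering $\dim\mathcal{C}(G)=n_c$), hence $\mathcal{B}(G)=\mathcal{C}(G)^\perp=\{K\subseteq E:\ \abs{K\cap E_i}\ \text{even for }1\le i\le n_c\}$. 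In particular $\mathcal{B}(G)$, and therefore the whole system describing SSWs in Observation~\ref{obs:lp} and Eq.~\eqref{eq:negativity}, is invariant under the group $H=S_{n_0}\times S_{n_1}\times\cdots\times S_{n_{n_c}}$ of edge permutations that stabilize each block.

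\emph{Symmetrization and the coefficient count.} For a permutation-invariant $\rho$ the weight $p_T=\Tr(\rho P_T)$ depends only on $\abs{T}$, hence is $H$-invariant; thus the objective of Eq.~\eqref{eq:negativity} is $H$-invariant, while its feasible region is $H$-equivariant because the constraints are generated from the $H$-stable set $\mathcal{B}(G)$ by relabeling (and $\sigma(G)\sim G$ for $\sigma\in H$, cf.\ Observation~\ref{obs:order}). Since this feasible region is a polyhedron and $H$ is finite, averaging any feasible $(w_T,q_T(S))$ over $H$ yields an $H$-invariant feasible point with the same objective value, so one may restrict without loss of optimality to $H$-invariant SSWs. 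For such $W$, the coefficient $w_T$ depends only on the $H$-orbit of $T$, i.e.\ only on the tuple $(\abs{T\cap E_0},\dots,\abs{T\cap E_{n_c}})$; the number of these tuples is $\prod_i(n_i+1)$, the product running over the (at most $n_c+1$) blocks, and by AM--GM together with the monotonicity of $x\mapsto(1+\abs{E}/x)^x$ this is at most $\big(1+\tfrac{\abs{E}}{n_c+1}\big)^{n_c+1}$, which is the claimed bound.

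\emph{Counting the reduced LP.} After symmetrization one imposes the constraints of Eq.~\eqref{eq:linearConstraintsK} only for one cut $K$ per $H$-orbit, an orbit being labeled by $(k_i)_i$ with $0\le k_i\le n_i$, $k_i$ even for $i\ge1$, not all zero; the residual stabilizer of such a $K$ is $\prod_i(S_{k_i}\times S_{n_i-k_i})$. Relative to it, the variables $q_{T\cup\tilde T}(K)$ are labeled by $(t_i,\tilde t_i)$ with $0\le t_i\le k_i$ and $0\le\tilde t_i\le n_i-k_i$, the constraints $q_T(K)\le w_T$ carry the same labels, and the scalar components of the vector inequality (indexed by $U\subseteq K$) collapse to $(u_i)$ with $0\le u_i\le k_i$. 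Hence the numbers of variables and of constraints are each controlled, up to a small constant factor, by $S:=\sum_{(k_i)}\prod_i(k_i+1)(n_i-k_i+1)$, which factors over blocks as $S\le\prod_i\big[\sum_{k_i}(k_i+1)(n_i-k_i+1)\big]$. A bridge block contributes $\sum_{k=0}^{n_i}(k+1)(n_i-k+1)=\binom{n_i+3}{3}\le\tfrac{(n_i+3)^3}{6}$, whereas a cycle block, where $k_i$ runs only over even values, contributes at most $\tfrac{(n_i+3)^3}{12}$ (a finite/elementary estimate); multiplying and applying AM--GM as above yields the stated $\tfrac{2}{12^{n_c+1}}\big(3+\tfrac{\abs{E}}{n_c+1}\big)^{3n_c+3}$ bound on variables and $\tfrac{6}{12^{n_c+1}}\big(3+\tfrac{\abs{E}}{n_c+1}\big)^{3n_c+3}$ on constraints, with the bridge-free case handled by using $\abs{E}\ge 3n_c$ to absorb the missing block into the same expression.

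\emph{Main obstacle.} The symmetrization itself is routine once the block description of $\mathcal{B}(G)$ is in hand; the delicate part is the orbit bookkeeping of the last step — pinning down exactly which $q$-variables and which scalar rows of the vector inequality in Eq.~\eqref{eq:linearConstraintsK} survive the symmetry reduction, arranging the resulting multi-index sums so that they factor cleanly over the blocks, and then extracting the precise numerical constants, where the parity restriction on cycle edges (responsible for the $12$ rather than $6$) and the cactus-specific inequality $\abs{E}\ge 3n_c$ both have to be used.
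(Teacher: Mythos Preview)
Your proposal is correct and follows essentially the same route as the paper: partition the edges of the cactus into the $n_c$ cycle blocks and the bridge block, describe $\mathcal{B}(G)$ by the parity condition on cycle blocks, symmetrize the LP under the product of symmetric groups on the blocks, and then count orbits via multi-indices, factor the resulting sums blockwise, and bound each factor by $(c_i+3)^3/6$ or $(c_i+3)^3/12$ before applying AM--GM. The only cosmetic difference is that the paper simply keeps the bridge block even when $c_0=0$ (it then contributes trivial factors), so your separate treatment of the bridge-free case via $\abs{E}\ge 3n_c$ and the monotonicity of $x\mapsto(1+\abs{E}/x)^x$ is not needed.
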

\begin{observation}
	For permutation-invariant states in tree networks and ring networks, the problem of searching for SSWs can be cast as an LP of $O(\abs{E}^3)$ variables and constraints, and the SSW can be determined by $\abs{E}+1$ coefficients.
\end{observation}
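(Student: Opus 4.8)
The plan is to derive this as a refinement of the preceding observation, using the full edge-permutation symmetry that opens up when $\rho$ is permutation-invariant and that the generic cactus bound does not exploit. First I would reduce the search to permutation-invariant witnesses. Let $S_{|E|}$ act on observables of the form Eq.~\eqref{eq:SSW} by permuting edge labels, so that $\pi$ sends the coefficient family $\{w_T\}$ to $\{w_{\pi^{-1}(T)}\}$. For a tree the cut space $\mathcal{B}(G)$ is the whole edge space $\{0,1\}^{E}$, and for a ring it is the set of even-cardinality subsets of $E$; in both cases $\mathcal{B}(G)$ is invariant under every coordinate permutation. Consequently the defining system of an SSW, in the cut-space form of Observation~\ref{obs:lp} (Eq.~\eqref{eq:linearConstraintsK}), is permutation-equivariant: $\pi$ carries SSWs to SSWs, sending a feasible multiplier family $q_T(K)$ to $q_{\pi^{-1}(T)}(\pi^{-1}(K))$. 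Averaging over $S_{|E|}$ and using $\Tr(\pi(W)\rho)=\Tr(W\rho)$ for permutation-invariant $\rho$, it suffices to optimize over $W$ with $w_T$ depending only on $|T|$. This already gives the stated ``$|E|+1$ coefficients'': such a $W$ is pinned down by $w_0,\dots,w_{|E|}$.

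Next I would count variables and constraints. By Eq.~\eqref{eq:linearConstraintsK}, a permutation-invariant $W$ is an SSW iff for every nonempty $K\in\mathcal{B}(G)$ there exist $q_T(K)\le w_T$ with $\sum_{T\subseteq K}q_{T\cup\tilde T}(K)\vect{x}_T(K)\ge 0$ for all $\tilde T\subseteq E\setminus K$. Since the $w$'s now depend only on cardinality, a relabeling shows that the feasibility of this inner system at $K$ depends only on $|K|$, so it is enough to impose it for one representative cut of each size: sizes $0,1,\dots,|E|$ for a tree and the even sizes for a ring, hence at most $|E|+1$ representatives. For a fixed representative $K$ I would then average the multipliers over the setwise stabilizer $\mathrm{Stab}(K)\cong S_{|K|}\times S_{|E|-|K|}$; the inequality $q_{T'}(K)\le w_{|T'|}$ survives this averaging, and for $\sigma\in\mathrm{Stab}(K)$ the vector $\vect{x}_{\sigma T}(K)$ is merely the $\sigma$-induced coordinate permutation of $\vect{x}_T(K)$, so a stabilizer average of a feasible $q$ stays feasible by convexity. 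Thus $q_{T\cup\tilde T}(K)$ may be taken to depend only on $(|T|,|\tilde T|)$ --- at most $(|K|+1)(|E|-|K|+1)$ multipliers per representative --- and a short computation (grouping the sum over $T$ by $|T|$ and by $|T\cap U|$) shows the $U$-component of $\sum_{T\subseteq K}q_{T\cup\tilde T}(K)\vect{x}_T(K)$ depends only on $|U|$, collapsing the componentwise inequality to one scalar inequality per pair $(|U|,|\tilde T|)$, again $(|K|+1)(|E|-|K|+1)$ of them, plus the same number of $q\le w$ constraints. Summing $(|K|+1)(|E|-|K|+1)$ over the $\le|E|+1$ representative sizes is $O(|E|^3)$; together with the $|E|+1$ variables $w_k$, the objective $-\sum_k w_k\sum_{|T|=k}p_T$, and the optional upper bounds $\le d^{|K|}$ of Eq.~\eqref{eq:negativity} (which at most double the constraint count), this is an LP with $O(|E|^3)$ variables and constraints.

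The main obstacle is the bookkeeping of these symmetrization steps rather than any deep idea: one must verify that $\mathcal{B}(G)$ is invariant under all coordinate permutations for trees and rings (so that $\pi(W)$ is genuinely an SSW), check the equivariance $\vect{x}_{\sigma T}(K)=P_\sigma\vect{x}_T(K)$ with $P_\sigma$ a permutation matrix for $\sigma\in\mathrm{Stab}(K)$, and carry out the orbit count that reduces the $2^{|K|}$ components of the vector inequality to $|K|+1$ scalar ones. The ring case is the one that truly needs this orbit-level (rather than block-level) symmetrization: the generic cactus bound of the preceding observation only gives $O(|E|^6)$ when $n_c=1$, and it is the single-cycle structure together with permutation invariance that brings the count down to $O(|E|^3)$.
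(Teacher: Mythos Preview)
Your argument is correct and is essentially the paper's own proof: for trees and rings all edges lie in a single block ($C_0=E$ resp.\ $C_1=E$), so the cactus-level symmetrization is already full $S_{|E|}$-symmetrization, the witness is parametrized by $w_{|T|}$, the multipliers by $(|T|,|\tilde T|)$ per cut size, and the count $\sum_k (k+1)(|E|-k+1)=\tfrac{(|E|+1)(|E|+2)(|E|+3)}{6}$ (respectively its restriction to even $k$) gives $O(|E|^3)$. One small correction to your final remark: for a ring the cactus proof already has $c_0=0$, hence $a_0=1$ and the exact formula $\prod_i a_i=a_1=O(|E|^3)$---the $O(|E|^6)$ you quote is only the crude upper bound stated in the preceding observation, so no extra ``orbit-level'' step beyond block-level is needed.
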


This shows that for cactus networks with a fixed number of cycles, the scale of
the problem only depends polynomially on $\abs{E}$.
Notably, the requirement for permutation invariance can be relaxed to hold
separately in each cycle and in the set of edges not in any cycle, which
enlarges the application of the result. 

As the proof is rather lengthy, we first illustrate it with an example of the bilocal network, and then provide the full proof at the end of this section.

\textit{The example of bilocal network.}---%
For a permutation invariant state $\rho$ in the bilocal network of Fig.~\ref{fig:networks}(a), the SSW $W$ can be assumed to be the following form:
\begin{equation}
    W=w_0 \Phi^{\otimes 2} + w_1\mathcal{P}[\Phi\otimes(\mathbb{I}-\Phi)]+w_2(\mathbb{I}-\Phi)^{\otimes 2},
\end{equation}
where $X\otimes Y$ denotes $X_{AB_1}\otimes Y_{B_2C}$, and $\mathcal{P}[\Phi\otimes(\mathbb{I}-\Phi)]:=\Phi\otimes(\mathbb{I}-\Phi) + (\mathbb{I}-\Phi)\otimes\Phi$.
Then the constraints for $W$ are that $W \succeq Q_A^{\Gamma_A}$ and $W \succeq Q_B^{\Gamma_B}$, where $0 \preceq Q_A, Q_B \preceq \mathbb{I}$ and
\begin{equation}
    \begin{aligned}
        Q_A^{\Gamma_A}=&q_{0,0}^A \Phi^{\otimes 2} + q_{0,1}^A \Phi\otimes(\mathbb{I}-\Phi)\\
        &+q_{1,0}^A(\mathbb{I}-\Phi)\otimes\Phi
        +q_{1,1}^A(\mathbb{I}-\Phi)^{\otimes 2},\\
        Q_B^{\Gamma_B}=&q_0^B \Phi^{\otimes 2} + q_1^B\mathcal{P}[\Phi\otimes(\mathbb{I}-\Phi)]+q_2^B(\mathbb{I}-\Phi)^{\otimes 2}.
    \end{aligned}
\end{equation}
The constraint $W \succeq Q_C^{\Gamma_C}$ is equivalent to $W \succeq Q_A^{\Gamma_A}$ from the symmetry of the network.
Note that $Q_A^{\Gamma_A}$ can not be written in the symmetric form as only one edge is partially transposed, making it not symmetric.
Then $W \succeq Q_{A/B}^{\Gamma_{A/B}}$ is equivalent to $w_{i+j}\ge q_{i,j}^A$ and $w_i\ge q_i^B$.
We also have
\begin{equation}
	\begin{aligned}
		&Q_A = (r_0^A P^+ + r_1^A P^-)\otimes\Phi + (r_2^A P^+ + r_3^A P^-)\otimes(\mathbb{I}-\Phi),\\
		&Q_B = r_0^B {P^+}^{\otimes 2} + r_1^B (P^+ \otimes P^- + P^- \otimes P^+) + r_2^B {P^-}^{\otimes 2},
	\end{aligned}
\end{equation}
where 
\begin{equation}
	\begin{bmatrix}
		r_0^A\\[.8ex]
        r_1^A\\[.8ex]
        r_2^A\\[.8ex]
        r_3^A
	\end{bmatrix}
	=\frac{1}{d}
	\begin{bmatrix}
		q_{0,0}^A + (d-1) q_{1,0}^A\\[.8ex]
        -q_{0,0}^A+(d+1)q_{1,0}^A\\[.8ex]
        q_{0,1}^A+(d-1)q_{1,1}^A\\[.8ex]
        -q_{0,1}^A+(d+1)q_{1,1}^A
	\end{bmatrix}
\end{equation}
and
\begin{equation}
		\begin{bmatrix}
		r_0^B\\[.8ex]
        r_1^B\\[.8ex]
        r_2^B
	\end{bmatrix}
	=\frac{1}{d^2}
	\begin{bmatrix}
		q_0^B+2(d-1)q_1^B+(d-1)^2q_2^B\\[.8ex]
        -q_0^B+2q_1^B+(d^2-1)q_2^B\\[.8ex]
        q_0^B-2(d+1)q_1^B+(d+1)^2 q_2^B
	\end{bmatrix}.
\end{equation}
Then the constraints $0 \preceq Q_A, Q_B \preceq \mathbb{I}$ are equivalent to $0 \le r_i^A, r_j^B \le 1$.
Therefore, the SSW is determined by $3$ variables $w_0$, $w_1$, and $w_2$. The corresponding LP has $10$ variables $w_i$, $q_{i,j}^A$, and $q_i^B$, with $21$ constraints given by inequalities $w_{i+j}\ge q_{i,j}^A$, $w_i\ge q_i^B$, $r_i^A, r_j^B \ge 0$, and $r_i^A, r_j^B \le 1$.

The detailed proof is in the following.

\begin{proof}
	In a cactus graph, an edge belongs to at most one cycle, therefore we can classify edges with respect to which cycle they belong to. 
	Denote the set of edges in different cycles by $C_i$ with $\abs{C_i}:=c_i$ for $i=1,\dots,n_c$, and denote the set of edges not in any cycle by $C_0=E \setminus \bigcup_{i=1}^{n_c}C_i$ and $\abs{C_0}=c_0$.
	It is not difficult to verify that
	\begin{equation}
		\mathcal{B}(G)=\qty{K \mid K=\bigcup_{i=0}^{n_c}T_i, T_i \subseteq C_i\ \forall i, \abs{T_i} \text{ is even}\ \forall i \ge 1},
	\end{equation}
	which shows that the cut space consists of the edge sets which have even edges in $C_i$ for $i \ge 1$ and have arbitrary edges in $C_0$.
	In this case, if we exchange two edges in $C_i$ and obtain a new graph $G'$, it holds that $G \sim G'$, since the exchange operation does not change the number of edges in $C_i$.
	Therefore, for a given SSW $W$, if we exchange any two edges in $C_i$, the obtained operator $W'$ is still an SSW, which follows from Observation~\ref{obs:order}.
	For a permutation-invariant state $\rho$, the SSWs $W$ and $W$' have the same expectation, that is, $\Tr(W \rho)=\Tr(W' \rho)$, hence we can restrict the SSW to be invariant under these permutations, that is,
	\begin{equation}
		W = \sum_{t'_0, \dots, t'_{n_c}} w_{t'_0, \dots, t'_{n_c}} \sum_{\qty{T \subseteq E \mid t_i(T)=t'_i\ \forall i}} P_T,
	\end{equation}
	where $t_i(T):=\abs{T\cap C_i}$ and $t'_i$ ranges from $0$ to $c_i$, and $P_T=\bigotimes_{e \in T} \Phi_e \otimes \bigotimes_{e \notin T}(\mathbb{I}_e-\Phi_e)$ as defined in Appendix~\ref{app:lp}.
	Therefore, the number of coefficients of the SSW is $\prod_{i=0}^{n_c}(c_i+1)\le \qty(1+\frac{\abs{E}}{n_c+1})^{n_c+1}$ as $\sum_{i=0}^{n_c}c_i=\abs{E}$.
	
	As Eq.~\eqref{eq:linearConstraintsK}, nonempty $K \in \mathcal{B}(G)$ induces constraints of Eq.~\eqref{eq:negativity} that there exist $q_T(K)\le w_T$ such that
    \begin{equation}\label{eq:linearConstraintsK2}
		0 \le \sum_{T\subseteq K} q_{T\cup\tilde{T}}(K) \vect{x}_T(K) \le d^\abs{K},
		\quad \forall\tilde{T}\subseteq E\setminus K,
	\end{equation}
	with  $\vect{x}_T(K)=\bigotimes_{e\in T} \qty(1,-1) \otimes
	\bigotimes_{e\in K\setminus T} \qty(d-1,d+1)$.
	The symmetry of the SSWs allows us to assume that $q_{T_1\cup\tilde{T}_1}(K)=q_{T_2\cup\tilde{T}_2}(K)$ for $T_1, T_2\subseteq K$ and $\tilde{T}_1, \tilde{T}_2\in E\setminus K$ if $t_i(T_1)=t_i(T_2)$ and $t_i(\tilde{T}_1)=t_i(\tilde{T}_2)$ for all $i$.
	Then for a given $\tilde{T}$, we only need to consider $T\subseteq K$ with different $t_i(T)$, that is, only $\prod_{i=0}^{n_c}(t_i(K)+1)$ variables $q_{T\cup\tilde{T}}(K)$ is enough, instead of $2^\abs{K}$.
	Moreover, the equivalence of the variables also reduces the componentwise inequalities given by Eq.~\eqref{eq:linearConstraintsK2} from $2^{\abs{K}+1}$ to $2\prod_{i=0}^{n_c}(t_i(K)+1)$, since $r_{T\cup\tilde{T}}(K)$ in Eq.~\eqref{eq:qsbt} also satisfy that $r_{T_1\cup\tilde{T}_1}(K)=r_{T_2\cup\tilde{T}_2}(K)$ for $T_1, T_2\subseteq K$ and $\tilde{T}_1, \tilde{T}_2\in E\setminus K$ if $t_i(T_1)=t_i(T_2)$ and $t_i(\tilde{T}_1)=t_i(\tilde{T}_2)$ for all $i$.
	Similar discussion shows the for a given $K$, we only need to consider $\tilde{T}\subseteq E \setminus K$ with different $t_i(\tilde{T})$.
	Therefore, for each nonempty $K \in \mathcal{B}(G)$, there are $\prod_{i=0}^{n_c}(t_i(E \setminus K)+1)=\prod_{i=0}^{n_c}(c_i-t_i(K)+1)$ inequivalent $\tilde{T}$, each of which leads to $2\prod_{i=0}^{n_c}(t_i(K)+1)$ inequalities.
    For the same reason, we only need to consider cuts $K$ with different $t_i(K)$.
	As a result, there are $\sum_{t_0', t_1', \dots, t_{n_c}'}\prod_{i=0}^{n_c}(c_i-t_i'+1)(t_i'+1)-\prod_{i=0}^{n_c}(c_i+1)$ variables $q_T(K)$ and $2\qty[\sum_{t_0', t_1', \dots, t_{n_c}'}\prod_{i=0}^{n_c}(c_i-t_i'+1)(t_i'+1)-\prod_{i=0}^{n_c}(c_i+1)]$ inequalities from Eq.~\eqref{eq:linearConstraintsK2}, where $t'_0$ ranges over the integers from $0$ to $c_0$, and for $i\ge1$, $t'_i$ ranges over the even integers from $0$ to $c_i$.
	The term $\prod_{i=0}^{n_c}(c_i+1)$ is subtracted to exclude $\emptyset \in \mathcal{B}(G)$.
	Note that the summation and multiplication commute in this case. Let $a_i:=\sum_{t'_i}(c_i - t'_i + 1) (t'_i +1)$, then we have
	\begin{equation}
		a_i =
		\begin{cases}
			\frac{(c_0+1)(c_0+2)(c_0+3)}{6}, \quad & i=0,\\
			\qty(\lfloor \frac{c_i}{2} \rfloor + 1) \qty(c_i+1 + \lfloor \frac{c_i}{2} \rfloor \frac{3c_i-4\lfloor \frac{c_i}{2} \rfloor -2}{3}), \quad & i \ge 1
		\end{cases}
	\end{equation}
	Together with $\prod_{i=0}^{n_c}(c_i+1)$ variables $w_{t'_0,\dots,t'_{n_c}}$, there are $\prod_{i=0}^{n_c}a_i$ variables.
	Together with constraints $q_T \le w_T$, there are $3\qty( \prod_{i=0}^{n_c}a_i - \prod_{i=0}^{n_c}(c_i+1))$
	inequalities.
	Since $a_0<\frac{(c_0+3)^3}{6}$ and $a_i < \frac{(c_i+3)^3}{12}$ for all $i\ge 1$, we have
	\begin{equation}
		\prod_{i=0}^{n_c}a_i \le \frac{2}{12^{n_c+1}} \qty(3+\frac{\abs{E}}{n_c+1})^{3n_c+3}
	\end{equation}
	and
	\begin{equation}
		3\qty( \prod_{i=0}^{n_c}a_i - \prod_{i=0}^{n_c}(c_i+1)) \le \frac{6}{12^{n_c+1}} \qty(3+\frac{\abs{E}}{n_c+1})^{3n_c+3}
	\end{equation}
for general cactus networks.

	For the tree networks, calculations shows that we have $\frac{(\abs{E}+1)(\abs{E}+2)(\abs{E}+3)}{6}=O(\abs{E}^3)$ variables and $\frac{\abs{E}(\abs{E}+1)(\abs{E}+5)}{2}=O(\abs{E}^3)$ inequalities, and the SSW has $\abs{E}+1$ different coefficients.
	Similarly, for ring networks, we also have $O(\abs{E}^3)$ variables and constraints.
\end{proof}

\section{SSWs for IPEN states in different networks}\label{app:IPEN}

\subsection{Tree networks}
Observation~\ref{obs:IPEN_tree} states that the state $\rho=\bigotimes_{e \in E} \rho_e^p$ in tree networks has non-zero GME negativity if and only if $\Tr(W_E\rho)<0$ for the SSW
\begin{equation}
    W_E=-d\bigotimes_{e\in E} \Phi_e + \frac{d+1}{2} \bigotimes_{e\in E} \frac{\mathbb{I}_e+d\Phi_e}{d+1}.
\end{equation}
Furthermore, when $\Tr(W_E\rho)<0$, the GME negativity of $\rho$ is quantified by $-\Tr(W_E\rho)$, that is, $\mathcal{N}(\rho)=\max\qty{0,-\Tr(W_E \rho)}$.

\begin{proof}
Recall from Appendix~\ref{app:SSW_family} that $W_E$ satisfies constraints in Eq.~\eqref{eq:GME_negativity_dual}, then
\begin{equation}\label{eq:tree_witness_value}
\begin{aligned}
    \mathcal{N}(\rho) & \ge -\Tr(W_E \rho) \\
    & = d\qty[\tfrac{1+(d^2-1)p}{d^2}]^\abs{E} - \tfrac{d+1}{2}
	\qty[\tfrac{1+(d-1)p}{d}]^\abs{E}
\end{aligned}
\end{equation}
when $\Tr(W_E\rho)<0$.

With $W_E$, we can verify the GME of $\rho=\bigotimes_{e \in E} \rho_e^p$ if $\Tr(W_E\rho) < 0$.
We can further show that $\rho$ is a PPT mixture if $\Tr(W_E\rho) \ge 0$.
Let
\begin{align}
\gamma&=\frac{1-p}{1+(d^2-1)p},\\
\rho_T&=\bigotimes_{e \in T} (\mathbb{I}_e-\Phi_e) + (d+1)(d-1)^{\abs{T}-1}
\bigotimes_{e \in T} \Phi_e.
\end{align}
Direct calculations show that $\rho_T\succeq 0$,
$\rho_T^{\Gamma_T} \succeq 0$, and
\begin{equation}\label{eq:IPEN_decompose}
\begin{aligned}
	\rho =&\frac{2d-(d+1)[1+(d-1)\gamma]^\abs{E}}{(d-1)[1+(d^2-1)\gamma]^\abs{E}}
	\bigotimes_{e \in T} \Phi_e\\
    &+ \frac{1}{[1+(d^2-1)\gamma]^\abs{E}}
	\sum_{\emptyset \ne T\subseteq E} \gamma^{\abs{T}} \rho_T \otimes \bigotimes_{e
	\notin T} \Phi_e.
\end{aligned}
\end{equation}
Since $\rho_T \otimes \bigotimes_{e \notin T} \Phi_e$ is PPT with respect to
$\Gamma_T$, and the cut space $\mathcal{B}(G)$ of tree networks contains all
subsets of $E$, we can find a bipartition $S|S^c$ such that $T(S)=T$, and thus
$\mathcal{N}_S \qty(\frac{\rho_T}{\Tr(\rho_T)} \otimes \bigotimes_{e \notin T}
\Phi_e)=0$ and $\sum_{\emptyset \ne T\subseteq E} \gamma^{\abs{T}} \rho_T
\otimes \bigotimes_{e \notin T} \Phi_e$ is a PPT mixture.
Note that the coefficient of the first term on the right-hand side of
Eq.~\eqref{eq:IPEN_decompose} is exactly $-\frac{2\Tr(W_E\rho)}{d-1}$.
Therefore, when $\Tr(W_E\rho)=0$, i.e.,
\begin{equation}
\frac{2d-(d+1)[1+(d-1)\gamma]^\abs{E}}{(d-1)[1+(d^2-1)\gamma]^\abs{E}}=0,
\end{equation}
$\rho$ is a PPT mixture.
We denote by $p_0$ the critical visibility such that $\Tr(W_E\rho)=0$, and we
have
\begin{equation}
p_0=\frac{1-\gamma_0}{1+(d^2-1)\gamma_0} \qq{with} \gamma_0
=\frac{\sqrt[\abs{E}]{\frac{2d}{d+1}}-1}{d-1}.
\end{equation}
When $\Tr(W_E\rho)>0$, or equivalently, $p<p_0$, the state $\rho=\bigotimes_{e \in E} \rho_e^p$ can be obtained from $\bigotimes_{e \in E} \rho_e^{p_0}$ by local operations on each party and classical communication, hence is also a PPT mixture.

Moreover, we can prove that $-\Tr(W_E\rho)$ is exactly the GME negativity, i.e., $-\Tr(W_E\rho)=\mathcal{N}(\rho)$, when $\Tr(W_E\rho) < 0$.
We have just proved that there is a bipartition $S|S^c$ such that $\mathcal{N}_S(\rho_T \otimes \bigotimes_{e \notin T} \Phi_e)=0$, then from Eq.~\eqref{eq:GME_negativity} and Eq.~\eqref{eq:IPEN_decompose}, the GME negativity of $\rho$ is upper bounded by
\begin{equation}
	\mathcal{N}(\rho) \le \frac{2d-(d+1)[1+(d-1)\gamma]^\abs{E}}{(d-1)[1+(d^2-1)\gamma]^\abs{E}} \mathcal{N}_{S'}\qty(\bigotimes_{e \in E} \Phi_e)
\end{equation}
for arbitrary nonempty $S' \subseteq V$.
Selecting $S'$ such that $T(S')=\qty{e}$ for some $e$, calculations lead to that
\begin{equation}
	\mathcal{N}(\rho) \le d\qty[\tfrac{1+(d^2-1)p}{d^2}]^\abs{E} - \tfrac{d+1}{2} \qty[\tfrac{1+(d-1)p}{d}]^\abs{E} = -\Tr(W_E \rho).
\end{equation}
Recall that Eq.~\eqref{eq:tree_witness_value} leads to that
$\mathcal{N}(\rho)\ge-\Tr(W_E \rho)$, we thus prove that
$\mathcal{N}(\rho)=-\Tr(W_E\rho)$.
\end{proof}

\subsection{Biseparability of IPEN states in the bilocal network}
We have shown that a IPEN state $\rho=\bigotimes_{e \in E}\rho_e^p$ in tree networks have non-zero negativity, if and only if $\Tr(W_E\rho)<0$. 
For $\rho$ in the bilocal network in Fig.~\ref{fig:networks}(a), it is further the necessary and sufficient condition for GME, that is, $\rho$ is GME if and only if $\Tr(W_E\rho)<0$.

Still, we let $\gamma=\frac{1-p}{1+(d^2-1)p}$
(i.e., $\rho^p=\frac{\Phi+\gamma(\mathbb{I}-\Phi)}{1+(d^2-1)\gamma}$) and hence
\begin{equation}
\rho=\frac{[\Phi+\gamma(\mathbb{I}-\Phi)]^{\otimes 2}}{[1+(d^2-1)\gamma]^2}.
\end{equation}
Then we have $\Tr(W_E\rho) < 0$ when $\gamma < \gamma_0
=\frac{\sqrt{\frac{2d}{d+1}}-1}{d-1}$. When $\gamma = \gamma_0$, $\rho$ can be
written in the form
\begin{equation}
    \begin{aligned}
        	\rho = &\gamma  \frac{(\mathbb{I}+d\Phi) \otimes \Phi + \Phi \otimes (\mathbb{I}+d\Phi)}{[1+(d^2-1)\gamma]^2}\\
            + &\gamma^2 \frac{(\mathbb{I}-\Phi)^{\otimes 2}+(d^2-1)\Phi^{\otimes 2}}{[1+(d^2-1)\gamma]^2}.
    \end{aligned}
\end{equation}
It is known that $\mathbb{I}+\beta\Phi$ is separable when $\beta \in [-1, d]$ \cite{TerhalSchmidtnumberdensity2000}. Therefore, $\mathbb{I}+d\Phi$ is separable, and thus $(\mathbb{I}+d\Phi)\otimes\Phi$ and $\Phi\otimes(\mathbb{I}+d\Phi)$ is biseparable. 
For the state $(\mathbb{I}-\Phi)^{\otimes 2}+(d^2-1)\Phi^{\otimes 2}$, note that 
\begin{equation}
    (\mathcal{T}_{(A, B_1)}\otimes\mathcal{T}_{(B_2,C)})(\Phi_{AC}\otimes\Phi_{B_1B_2})= \tfrac{(\mathbb{I}-\Phi)^{\otimes 2}+(d^2-1)\Phi^{\otimes 2}}{d^2(d^2-1)},
\end{equation}
where $\mathcal{T}_{(A, B_1)}$ and $\mathcal{T}_{(B_2,C)}$ are both $U\otimes U^*$-twirling operations and thus can be performed through local operations and classical communications. 
Since $\Phi_{AC}\otimes\Phi_{B_1B_2}$ is separable with respect to the partition $B|AC$, the state $(\mathbb{I}-\Phi)^{\otimes 2}+(d^2-1)\Phi^{\otimes 2}$ is also separable with respect to $B|AC$.
Therefore, $\rho$ is also biseparable when $\gamma=\gamma_0$.
For $\gamma>\gamma_0$, $\rho$ can be obtained by local operations and classical communications from $\rho$ with $\gamma=\gamma_0$, then is also biseparable.

\subsection{Biseparability of IPEN states in the triangle network}
For the IPEN state $\rho=\bigotimes_{e \in E} \rho_e^p$  in the triangle network in Fig.~\ref{fig:networks}(d), the SSW also provides necessary and sufficient condition for GME.
The SSW is
\begin{equation}\label{eq:triangle}
    W^\Delta = -\frac{d^2-1}{2}\bigotimes_{e \in E} \Phi_e + \frac{1}{2} \sum_{\emptyset \ne T \subsetneq E} \bigotimes_{e \in T} \Phi_e \otimes \bigotimes_{e \notin T} (\mathbb{I}_e-\Phi_e).
\end{equation}
Moreover, this SSW also quantifies the negativity of $\rho$, that is, $\mathcal{N}(\rho)=\max\qty{-\Tr(W^\Delta \rho),0}$.

The proof is as following. First, we prove that $W^\Delta$ is an SSW satisfying Eq.~\eqref{eq:GME_negativity_dual}, hence $\rho$ is GME when $\Tr(W^\Delta \rho) < 0$ and $-\Tr(W^\Delta \rho)$ provides an lower bound of $\mathcal{N}(\rho)$, i.e., $-\Tr(W^\Delta \rho) \le \mathcal{N}(\rho)$.
Next, we show that when $\Tr(W^\Delta \rho) \ge 0$, $\rho$ is biseparable, thus $W^\Delta$ provides the necessary and sufficient condition for GME in $\rho$.
Finally, it is proved that when $\Tr(W^\Delta \rho) < 0$, we can find a decomposition $\sum_S p_S \rho_S$ of $\rho$ satisfying constraints in Eq.~\eqref{eq:GME_negativity} such that $\sum_S p_S \mathcal{N}_S(\rho_S)=-\Tr(W^\Delta \rho)$, which is an upper bound of $\mathcal{N}(\rho)$, that is, $-\Tr(W^\Delta \rho) \ge \mathcal{N}(\rho)$. It follows that $\mathcal{N}(\rho)=-\Tr(W^\Delta \rho)$.

\begin{proof}
For simplicity, we denote the three parties by $A$, $B$, and $C$ with $E=\qty{(A,B),\ (B,C),\ (C, A)}$, and use $X \otimes Y \otimes Z$ to denote $X_{(A, B)} \otimes Y_{(B, C)} \otimes Z_{(C, A)}$.
To verify that $W$ is an SSW, note that $W \succeq Q_B^{\Gamma_B}$, where $Q_B=P^+\otimes P^-\otimes\Phi + P^-\otimes P^+\otimes\Phi$ and $0 \preceq Q_B \preceq \mathbb{I}$.
The symmetry of $W$ leads to similar inequalities $W \succeq Q_A^{\Gamma_A}$ and $W \succeq Q_C^{\Gamma_C}$.

Still, we let $\gamma=\frac{1-p}{1+(d^2-1)p}$
(i.e., $\rho^p=\frac{\Phi+\gamma(\mathbb{I}-\Phi)}{1+(d^2-1)\gamma}$).
Then we have
\begin{equation}
	\Tr(W^\Delta \rho) = \frac{d^2-1}{2[1+(d^2-1)\gamma]^3} \qty[3(d^2-1)\gamma^2 + 3\gamma -1].
\end{equation}
$W^\Delta$ can certify GME in $\rho$, i.e., $\Tr(W^\Delta \rho)<0$ when $3(d^2-1)\gamma^2+3\gamma-1<0$, which holds when 
\begin{equation}
	\gamma < \gamma_0 := \frac{\sqrt{3(4d^2-1)}-3}{6(d^2-1)}.
\end{equation}

Then we want to show the biseparability of $\rho$ when $\gamma \ge \gamma_0$.
When $\gamma \ge \frac{1}{d+1}$, $\rho$ is fully separable since it is the tensor product of separable states $\Phi+\gamma(\mathbb{I}-\Phi)$. 
Thus we only need to check the case that $\gamma_0 \le \gamma < \frac{1}{d+1}$.
To prove biseparability of $\rho$, we decompose it as $\rho=\rho_A+\rho_B+\rho_C$, where
\begin{equation}\label{eq:rhob}
    \rho_B = \frac{1}{[1+(d^2-1)\gamma]^3} \qty[\sigma\otimes\Phi + \frac{\gamma^3}{3} (\mathbb{I}-\Phi)^{\otimes3}]
\end{equation}
and
\begin{equation}
    \sigma = \tfrac{\Phi^{\otimes 2}}{3} + \tfrac{\gamma}{2} [(\mathbb{I}-\Phi)\otimes\Phi + \Phi\otimes(\mathbb{I}-\Phi)] + \gamma^2(\mathbb{I}-\Phi)^{\otimes 2}.
\end{equation}
$\rho_A$ and $\rho_C$ can be obtained from symmetry of the state.
In the following, we prove that $\rho_B$ is separable with respect to the partition $B|AC$ when $\gamma \ge \gamma_0$, and thus $\rho$ is biseparable.

The state $\mathbb{I}-\Phi$ is a separable bipartite state, therefore $(\mathbb{I}-\Phi)^{\otimes 3}$ is separable with respect to the partition $B|AC$. 
Then we only need to check when $\sigma \otimes \Phi$ is separable with respect to $B|AC$, or more specifically, when $\sigma$ is separable with respect to $B|AC$.
Since a state is separable if and only if its partial transpose is separable, we investigate the separability of $\sigma^{\Gamma_B}$ instead.
We have
\begin{equation}
	\sigma^{\Gamma_B} = \lambda_1 {P^+}^{\otimes 2} + \lambda_2 {P^-}^{\otimes 2} + \lambda_3 (P^+ \otimes P^- + P^- \otimes P^+),
\end{equation}
where 
\begin{equation}
	\begin{aligned}
		&\lambda_1 = \frac{3(d-1)^2\gamma^2 + 3(d-1)\gamma + 1}{3d^2} > 0,\\
		&\lambda_2 = \frac{3(d+1)^2\gamma^2 - 3(d+1)\gamma + 1}{3d^2} > 0,\\
		&\lambda_3 = \frac{3(d^2-1)\gamma^2 + 3\gamma - 1}{3d^2}.
	\end{aligned}
\end{equation}
When $\gamma \ge \gamma_0$, it also holds $\lambda_3\ge0$.
Since $P^+ + P^- = \mathbb{I}$, we can write $\sigma^{\Gamma_B}$ in the form of
\begin{equation}
	\sigma^{\Gamma_B}=
	(\lambda_1-\lambda_2){P^+}^{\otimes 2} 
    + (\lambda_2-\lambda_3)({P^+}^{\otimes 2}
    + {P^-}^{\otimes 2}) + \lambda_3\mathbb{I}^{\otimes 2},
\end{equation}
when $\lambda_2\ge \lambda_3$, and
\begin{equation}
	\begin{aligned}
		\sigma^{\Gamma_B} = &
		(\lambda_1+\lambda_2-2\lambda_3){P^+}^{\otimes 2}\\
		&+ (\lambda_3-\lambda_2) (P^+\otimes \mathbb{I} + \mathbb{I} \otimes P^+) + \lambda_2\mathbb{I}^{\otimes 2}.
	\end{aligned}
\end{equation}
when $\lambda_2 < \lambda_3$, where $\lambda_1-\lambda_2=2\gamma(1-2\gamma)/d\ge0$ when $\gamma < \frac{1}{d+1}$ and $ \lambda_1 + \lambda_2 - 2 \lambda_3 = \frac{4}{d^2}(\gamma^2-\gamma+1/3) > 0$.
Moreover, we can prove that the states ${P^+}^{\otimes 2}$, ${P^+}^{\otimes 2} + {P^-}^{\otimes 2}$, and $P^+\otimes\mathbb{I} + \mathbb{I}\otimes P^+$ are all separable with respect to $B|AC$, and so is $\sigma^{\Gamma_B}$ and $\sigma$. 
First, since ${P^+}^{\Gamma}=\frac{\mathbb{I}+d\Phi}{2}$ are separable, $P^+$ is also separable. Hence ${P^+}^{\otimes 2}$ and $P^+\otimes \mathbb{I} + \mathbb{I} \otimes P^+$ are separable with respect to $B|AC$.
On the other hand, note that $\qty({P^+}^{\otimes 2} + {P^-}^{\otimes 2})^{\Gamma_B} = \frac{\mathbb{I}^{\otimes 2} + d^2 \Phi^{\otimes 2}}{2}$, which is separable with respect to $B|AC$, as $\Phi^{\otimes 2}$ is the maximally entangled state between $B$ and $AC$, whose local dimension is $d^2$. Therefore, ${P^+}^{\otimes 2} + {P^-}^{\otimes 2}$ is also separable with respect to $B|AC$.

To calculate the GME negativity $\mathcal{N}(\rho)$, recall that $\rho=\rho_A + \rho_B + \rho_C$ defined in Eq.~\eqref{eq:rhob}.
When $\gamma < \gamma_0$, we have
\begin{equation}
	\mathcal{N}(\rho) \le \sum_{S = A, B, C}\Tr\rho_S \mathcal{N}_S \qty(\frac{\rho_S}{\Tr\rho_S}) = -\Tr(W^\Delta \rho).
\end{equation}
Since $W^\Delta$ is an SSW satisfying Eq.~\eqref{eq:GME_negativity_dual}, we also have $\mathcal{N}(\rho) \ge -\Tr(W^\Delta \rho)$.
Therefore, we have $\mathcal{N}(\rho)=-\Tr(W^\Delta \rho)$.
\end{proof}

\subsection{Highly connected networks}\label{app:BG-SSW}
In Observation~\ref{obs:asymptotic}, we consider the sequence of networks such that asymptotic degree of every party exceeds half of the network size, i.e., $\lim_{\abs{V}\to\infty}\frac{\deg(v)}{\abs{V}}>\frac{1}{2}$ for
every $v\in V$.
Here, we employ
\begin{equation}
	W_{\mathcal{B}(G)} = -\bigotimes_{e\in E}\Phi_e
    + \sum_{\emptyset \ne K \in \mathcal{B}(G)}
    \bigotimes_{e\in K}\frac{\mathbb{I}_e-\Phi_e}{d-1}
    \otimes \bigotimes_{e\notin K}\Phi_e
\end{equation}
in Eq.~\eqref{eq:BG-SSW} to verify GME.

For the IPEN state $\rho=\bigotimes_{e\in E}\rho_e^p=\bigotimes_{e \in E} \frac{\Phi_e+\gamma(\mathbb{I}_e-\Phi_e)}{1+(d^2-1)\gamma}$ in the networks, the GME can be certified if 
\begin{equation}
	\Tr(W_{\mathcal{B}(G)} \rho) = \frac{\qty(\sum_{\emptyset \ne K \in \mathcal{B}(G)} [\gamma(d+1)]^{\abs{K}})-1}{[1+(d^2-1)\gamma]^\abs{E}} < 0.
\end{equation}
When $\gamma \ge \frac{1}{d+1}$, i.e., $\gamma(d+1) \ge 1$, the state $\rho^p_e=\frac{\Phi+\gamma(\mathbb{I}-\Phi)}{1+(d^2-1)\gamma}$ is separable, hence $\rho$ is biseparable. Therefore, we only consider the case that $\rho^p_e$ is entangled, i.e., $\gamma(d+1)<1$.
In this scenario, we can show that for highly connected networks such that $\lim_{\abs{V}\to\infty}\frac{\deg(v)}{\abs{V}}>\frac{1}{2}$ for any $v \in V$, it holds that
\begin{equation}
    \lim_{\abs{V}\to\infty} \sum_{\emptyset\ne K \in \mathcal{B}(G)} [\gamma(d+1)]^\abs{K} = 0,
\end{equation}
thus $\Tr(W_{\mathcal{B}(G)}\rho) < 0$ always holds for sufficiently large $\abs{V}$ as long as $\rho_e^p$ is entangled.
Then GME can be certified and Observation~\ref{obs:asymptotic} follows.
This is a generalization of the asymptotic survival of GME in complete networks in \cite{ContrerasTejadaAsymptoticSurvivalGenuine2022}, which shows that there is a general bound of $\gamma$ such that the GME asymptotically survives.

\begin{proof}

For the networks under consideration, there exists real number $c>\frac{1}{2}$
such that all vertices $v$ have degree $\deg(v)\ge c\abs{V}>\frac{\abs{V}}{2}$,
provided the network size $\abs{V}$ is large enough.
Let $\abs{V}=2k+1$ be odd (for even $\abs{V}$, similar discussion will give the
same result). For any $i=1,\dots,k$, there are $\binom{\abs{V}}{i}$ different
bipartition of $V$, each of which results in a cut $K$ with $\abs{K}> i(2c-1)k$.
To see this, we consider a bipartition $S|S^c$ where $S$ has $i$ parties ($i\le k$) without loss of generality.
On the one hand, as there are only $i$ parties in $S$, for any $v\in S$, there exist at most $i-1$ vertices in $S$ adjacent to $v$.
On the other hand, for each $v\in S$, since $\deg(v)\ge c\abs{V}$, there should be at least $c\abs{V}$ vertices in $V$ adjacent to $v$.
Therefore, at least $c\abs{V}-(i-1)$ edges incident with $v$ are in the cut of $S|S^c$, where
\begin{equation}
    c\abs{V}-(i-1)>(c-\frac{1}{2})\abs{V}>(2c-1)k.
\end{equation}
Hence we have
\begin{equation}
	\begin{aligned}
		&\sum_{\emptyset \ne K \in \mathcal{B}(G)} [\gamma(d+1)]^\abs{K}\\
		< & \sum_{i=1}^k \binom{\abs{V}}{i} [\gamma(d+1)]^{i(2c-1)k}\\
		\le & \sum_{i=1}^k (2k+1)^i [\gamma(d+1)]^{i(2c-1)k}.
	\end{aligned}
\end{equation}
Let $f_k:=(2k+1)[\gamma(d+1)]^{(2c-1)k}$, and it holds that $\lim_{k \to \infty} f_k = 0$ when $\gamma(d+1)<1$.
We have (when $f_k\ne 1$)
\begin{equation}
	\sum_{\emptyset \ne K \in \mathcal{B}(G)} [\gamma(d+1)]^{\abs{K}} \le f_k \frac{1-f_k^k}{1-f_k},
\end{equation}
which converges to $0$ as $k\to \infty$ ($\abs{V}\to\infty$).
Therefore, we have $\lim_{\abs{V}\to\infty}\qty(\sum_{\emptyset \ne K \in \mathcal{B}(G)} [\gamma(d+1)]^{\abs{K}})-1 = -1 < 0$ as long as $\gamma(d+1)<1$.
This means the critical visibility $p=\frac{1-\gamma}{1+(d^2-1)\gamma}\to \frac{1}{d+1}$ as $\abs{V}\to\infty$.
\end{proof}

\section{Complexity of estimation}\label{app:complexity}
The fidelity with respect to the maximally entangled state can be measured
efficiently, based on the method of direct fidelity estimation
\cite{FlammiaDirectFidelityEstimation2011}
or quantum state verification \cite{ZhuOptimalverificationfidelity2019,
YuOptimalverificationgeneral2019, YuStatisticalMethodsQuantum2022}.
Notably, required copies of the states in these methods are known not to increase with the system dimension.
Here, we provide a simple analysis based on the stabilizer formalism.

It is easy to verify that the maximally entangled state $\ket{\phi^+}$ is the only state stabilized by $X \otimes X$ with $X\ket{j}=\ket{j+1}$ and $Z \otimes Z^\dagger$ with $Z\ket{j} = \mathrm{e}^{\frac{2\pi\mathrm{i}}{d}j}\ket{j}$, where the addition is modulo $d$, that is, $X\otimes X\ket{\phi^+}=Z\otimes Z^\dagger \ket{\phi^+}=\ket{\phi^+}$.
Note that $X \otimes X$ and $Z \otimes Z^\dagger$ commutes with each other.
Denoting $X^i Z^j\otimes X^i {Z^\dagger}^j$ by $M_{i, j}$, we have $\Phi = \ketbra{\phi^+} = \frac{1}{d^2} \sum_{i,j=0}^{d-1}  M_{i, j}$, thus the fidelity with respect to $\Phi$ can be estimated by
\begin{equation}
	\Tr(\Phi \rho) = \frac{1}{d^2} \sum_{i,j=0}^{d-1} \Tr(M_{i, j} \rho).
\end{equation}
$M_{i, j}$ is not Hermitian in general, thus cannot be measured directly. 
However, by noting that $M_{i, j} + M_{d-i, d-j}$ is Hermitian, we can measure it instead, and it can also be measured locally.
Let $X^i Z^j = (Z^{d-j}X^{d-i})^{-1}=\sum_{k=0}^{d-1} \mathrm{e}^{i \varphi_k} \ketbra{\psi_k}$ and $X^i {Z^\dagger}^j = ({Z^\dagger}^{d-j}X^{d-i})^{-1}=\sum_{k=0}^{d-1} \mathrm{e}^{i \varphi'_k} \ketbra{\psi'_k}$.
Then we have
\begin{equation}
    \begin{aligned}
        &M_{i, j} +M_{d-i, d-j}\\
        =&X^iZ^j\otimes X^i {Z^\dagger}^j + Z^{d-j}X^{d-i}\otimes {Z^\dagger}^{d-j}X^{d-i}\\
        =& 2 \sum_{k,l=0}^{d-1} \cos(\varphi_k+\varphi'_l) \ketbra{\psi_k} \otimes \ketbra{\psi'_l}.
    \end{aligned}
\end{equation}
Note that the eigenvalues and eigenvectors can be solved analytically, making the measurements feasible.
In this case, we can measure the observable $\frac{\mathbb{I}}{d^2}+\frac{d^2-1}{2d^2}(M_{i,j}+M_{d-i,d-j})$ with probability of $\frac{1}{d^2-1}$ (for $i\ne 0$ or $j \ne 0$), then the expected value is $\Tr(\Phi \rho)$.

Then we can use this to estimate SSWs.
Let $X_1,\dots,X_m$ be independent random variables such that $a_i\le X \le b_i$, and $S_m = \frac{X_1 + \dots + X_m}{m}$, the Hoeffding's inequality \cite{HoeffdingProbabilityInequalitiesSums1963} states that
\begin{equation}\label{eq:Hoeffding}
	\Pr(S_m-\mathbb{E}(S_m) \ge \epsilon) \le \exp(-\frac{2m^2\epsilon^2}{\sum_{i=1}^m(b_i-a_i)^2}).
\end{equation}
Therefore, the complexity of the estimation is determined by the maximal and minimal value of the random variables.
For $W_E$ in Eq.~\eqref{eq:E-SSW}, we renormalize it as
\begin{equation}
	\tilde{W}_E = -\frac{2d}{d-1} \bigotimes_{e \in E} \Phi_e + \frac{d+1}{d-1} \bigotimes_{e \in E} \qty(\frac{\mathbb{I}_e+d\Phi_e}{d+1})
\end{equation}
such that $\max_\rho \abs{\Tr(\tilde{W}_E \rho)}=1$ to provide a reasonable estimation.
Then the measurement result is
\begin{equation}
	-\frac{2d}{d-1} \prod_{i=1}^\abs{E} m_i + \frac{d+1}{d-1} \prod_{i=1}^\abs{E} \frac{1+dm_i}{d+1},
\end{equation}
where $m_i$ is the measurement result on the $i$th edge.
When we measure $\frac{\mathbb{I}}{d^2}+\frac{d^2-1}{2d^2}(M_{i,j}+M_{d-i,d-j})$ on the $i$th edge, it holds that $-1 + \frac{2}{d^2} \le m_i \le 1$. Then the measurement result for $\tilde{W}_E$ is lower bounded by $-\frac{2d}{d-1}$, and upper bounded by $\frac{2(d^2-2)}{d(d-1)}$.
Substituting them into Eq.~\eqref{eq:Hoeffding}, we can get
\begin{equation}
	\Pr(\bar{W}_E - \expval{\tilde{W}_E} \le -\epsilon) \le \exp(-\frac{m\epsilon^2}{8\qty(1+\frac{1}{d})^2}),
\end{equation}
where $\bar{W}_E$ is the average value of the $m$ measurements.
By letting $-\epsilon=\bar{W}_E$ when $\bar{W}_E<0$, we get 
\begin{equation}
	\Pr(\Tr(\tilde{W}_E\rho) \ge 0) 
	\le \exp(-\tfrac{m\bar{W}_E^2}{8(1+\tfrac{1}{d})^2}) 
	\le \exp(-\tfrac{m\bar{W}_E^2}{18}),
\end{equation}
which is independent of the local dimensions and the scales of the networks.

For general SSWs, while required copies of the states for fidelity estimation do not increase with the local dimension, the sample complexity depends on the network's topology, which makes the general analytical discussion difficult.
However, numerical simulation suggests that sometimes it may scale exponentially with the scale of the networks.
In practice, there are two strategies to mitigate this exponential complexity.
First, leveraging joint measurements—specifically only using
two-party measurements with respect to the maximally entangled state—ensures that the sample complexity is independent of the number of parties.
Second, adopting the stronger structural assumption that the network state
adheres to a tensor product structure (as shown in Fig.~\ref{fig:networks})
simplifies the measurement task to only estimating the local fidelities
$\Tr(\Phi_e\rho_e)$.

\end{document}